\documentclass[11pt]{article}

\usepackage[tikz]{KLMM/klmm}

\newcommand{\ZZ}{\mathbb Z}
\newcommand{\Z}{\mathbb Z}

\def\Q{{\mathbb{Q}}}
\def\Z{{\mathbb{Z}}}

\def\F{{\mathbb{F}}}

\def\Pr{{\mathbb{P}}}

\def\Size{\text{Size}}

\DeclareMathOperator{\Res}{Res}
\DeclareMathOperator{\cont}{cont}
\DeclareMathOperator{\poly}{poly}
\DeclareMathOperator{\supp}{supp}

\title{Output-sensitive Sparse Polynomial GCD over Finite Fields is NP-hard}
\author{\fontsize{11}{16}\selectfont
    Ruichen Qiu\institution{State Key Laboratory of Mathematical Sciences, Academy of Mathematics and Systems Science, Chinese Academy of Sciences; University of Chinese Academy of Sciences}\equalcontrib
    \and Yichuan Cao\instref{1}\equalcontrib
    \and Qiao-Long Huang\institution{Shandong University}
    \and Ruyong Feng\instref{1}
    \and Xiao-Shan Gao\instref{1}\Cauthor
}
\date{\today}

\begin{document}
\maketitle

\begin{abstract}
\noindent
In this paper, we prove that output-sensitive sparse polynomial GCD computation over finite fields is NP-hard under BPP many-one reduction. More precisely, for two sparse univariate polynomials $f,g$ with finite field coefficients, there exists no randomized  algorithm to compute $\gcd(f,g)$, which is polynomial-time in the sizes of $f,g,\gcd(f,g)$ under the standard complexity assumption \(\mathrm{NP}\nsubseteq\mathrm{BPP}\).
This settles the open problem posed as Challenge 5 in \cite{SparsityChallenges} in the finite field setting.
Furthermore, we show that the Roots of Unity Detection problem over finite fields is NP-hard; that is, determining whether the GCD of a sparse univariate polynomial and $x^n - 1$ has nonzero degree is NP-hard.
\end{abstract}

\section{Introduction}
\label{sec-0}
The greatest common divisor (GCD) for polynomials is one of the core and most extensively studied problems in computer algebra \cite{ModernCA}.
%\cite{ModernCA,SparsityChallenges,Roche2018}. 
For two dense degree-$D$ univariate polynomials over a field, the fast Euclidean algorithm can be used to compute their GCD using $O(D\,\log^2 D\,\log\log D)$ field operations~\cite[Chapter 11]{ModernCA}. 

In practical computations and in computer algebra systems like Maple and Mathematica, polynomials are typically represented in a sparse form~{SparsityChallenges,Roche2018}, meaning that only the nonzero coefficients and their corresponding exponents are stored in lists.
Consider a sparse univariate polynomial $f$ of degree $D$, containing $T$ nonzero terms, and whose coefficients have a maximum absolute value of $H$. Its size is tightly bounded by
$$\Size(f) = O(T(\log D+\log H))=O(T\log(DH)).$$
So a sparse polynomial-time algorithm should have bit complexity that is polynomial in $T$, $\log D$, and $\log H$.

There exist no sparse polynomial-time GCD algorithms, because the GCD of two sparse polynomials can have a size that is exponential in the sizes of the input polynomials, as shown below.
\begin{example}[\cite{GCD-Schinzel2003}]
\label{ex-GCD1}
Let $p$, $q$ be distinct primes and 
$f=x^{pq} - 1, g=x^{p+q} - x^p - x^q + 1=(x^{p}-1)(x^{q}-1)$. Then we have the following irreducible decompositions:
$x^{p}-1 = (x-1)\Phi_{p}$,
$x^{q}-1 = (x-1)\Phi_{q}$,
$x^{pq}-1 = (x-1)\Phi_{p}\Phi_{q}\Phi_{pq}$, 
where $\Phi_{n}$ is the $n$-th cyclotomic polynomial.
Thus
\begin{align*}
\gcd(f,g) 
&= (x-1)\Phi_{p}\Phi_{q} =(x^p-1)\Phi_{q}= x^p\Phi_{q} - \Phi_{q}
\end{align*}
which contains $2q$ terms, assuming $q<p$.
\end{example}

This implies that the strongest guaranty we can provide is output-sensitive complexity; in other words, the algorithm’s running time must explicitly depend on the size of the GCD. 

There exist many breakthrough algorithms for GCD, both output-sensitive and output-insensitive, 
including methods based on sub-resultant computation \cite{collins1967subresultants,brown1971subresultants,hong2023computing}, 
methods based on interpolations \cite{zippel1979probabilistic,van2021sparse,hu2016fast,GCDHuangGao2025,kaltofen1990computing}, and modular methods based on Hensel lifting \cite{Moses1973EZ}. 
Recent research has continued to advance GCD computation, focusing on improvements such as the half-GCD algorithm \cite{van2025optimizing}. Additionally, constant-time variants of the GCD algorithm have been developed, which are particularly valuable in cryptographic applications where resistance to timing attacks is essential \cite{bernstein2019fast}.
In many cases, these algorithms perform quite effectively, as demonstrated by their use in popular software systems such as Maple and Mathematica.
However, the complexity of all these methods depends on the degrees of the input polynomials; therefore, they are not polynomial-time algorithms.

The following example shows that, even for output-sensitive complexity, the procedure to compute the GCD might be exponential.
\begin{example}
\label{ex-GCD2}
Let $p>q$ be two distinct primes, and 
$f=x^{p} - 1, g=x^{q} - 1$. 
%Then we have the following irreducible decomposition $x^{p}-1 = (x-1)\Phi_{p}$ and  $x^{q}-1 = (x-1)\Phi_{q}$. 
Then their GCD can be uniquely written as
\begin{align*}
%&x^{p}-1 = (x-1)\Phi_{p}\\
%&x^{q}-1 = (x-1)\Phi_{q}\\
&\gcd(f,g) = x-1 = Af+Bg.
\end{align*}
We can show that $T_B>p-q$; that is, even if $T_{\gcd(f,g)}$ is small, computing $\gcd(f,g)$ might require handling polynomials $A$ and $B$ that are exponential in the sizes of $f, g,\gcd(f,g)$.
\end{example}
 
%\textbf{Research problem 16.17 \cite{ModernCA}} 
%Can one compute the GCD of two multivariate polynomials in random polynomial time in the length of the sparse representation plus the degree? Is the output length always polynomial in the input length? (If not, one might consider the combined input plus output lengths in the first question.)

%In \cite[Theorem 3.3]{plaisted1984}, it is proven that determining whether the GCD of two sparse integer polynomials has a positive degree is an NP-hard problem.

Determining whether two sparse univariate polynomials over $\Q$ possess a non-trivial GCD is NP-hard~\cite[Theorem 3.3]{plaisted1984}; this result has also been generalized to polynomials over finite fields~\cite{Gathen1993CC,kaltofen2005complexity}.
This brings us directly to the following problem~\citep[Challenge 5]{SparsityChallenges} about the output-sensitive sparse polynomial GCD computation.

\begin{quote}
\textbf{Open Problem 1.} 
Find an algorithm for computing $\gcd(f,g)$ that is polynomial-time in $\|f\|_0$, $\|g\|_0$, and $\|\gcd(f,g)\|_0$.
\end{quote}

For finite fields, we make Open Problem 1 more explicit as follows.

\begin{quote}
\textbf{Open Problem 2.} 
For $f,g\in\F_q[x]$, if there exists an algorithm for computing $\gcd(f,g)$ that is polynomial in $T$, $\log D$, and $\log q$, 
where 
$$D = \max(\deg(f),\deg(g)), 
T = \max(\|f\|_0, \|g\|_0, \|\gcd(f,g)\|_0).$$
\end{quote}
Here, GCD is assumed to be monic  to ensure its uniqueness.

\subsection{Main Results}
In this paper, we settle Open Problem 2 by proving that sparse polynomial GCD computation is NP-hard under bounded-error probabilistic polynomial-time (BPP) reduction. 
Precisely, we have the following conditional obstruction.  

\begin{theorem}
\label{thm-m1}
If there were an algorithm that, for every input pair \(f,g\in{\mathbb F}_q[x]\) satisfying
\[
  \deg (f),\deg (g)\le D,\qquad
  |\operatorname{supp}(f)|,\ |\operatorname{supp}(g)|,\
  |\operatorname{supp}(\gcd(f,g))|\le T,
\]
computes the sparse GCD in bit complexity that is polynomial
in \(T\), \(\log D\), and \(\log q\), then \(\mathrm{NP}\subseteq\mathrm{BPP}\).
Thus, under the standard complexity assumption \(\mathrm{NP}\nsubseteq\mathrm{BPP}\), no
such algorithm exists.
\end{theorem}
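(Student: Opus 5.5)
The plan is to reduce SAT to sparse GCD computation with a BPP many-one reduction. The reduction is arranged so that, on the instances it produces, the GCD is guaranteed to have at most $2$ terms whenever the formula is satisfiable. The hypothetical algorithm then runs in time polynomial in the input size alone, and a time-out or an unverifiable answer certifies unsatisfiability.

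\emph{Step 1: isolation.} Given a 3-CNF formula $\varphi$ in variables $x_1,\dots,x_n$, apply the Valiant--Vazirani reduction. This yields, with probability $\Omega(1/n)$, a formula $\varphi'$ that has exactly one satisfying assignment if $\varphi$ is satisfiable, and none otherwise. $\varphi'$ can again be taken in 3-CNF with polynomially many clauses, after adding auxiliary variables via the usual Tseitin encoding of the parity constraints.

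\emph{Step 2: the field and the encoding.} Take the first $n$ primes $p_1,\dots,p_n$ and set $N=\prod p_i$, so $\log N=\poly(n)$. Randomly pick a prime $q=kN+1$ with $k$ of size about $2^{\poly(n)}$, so that $q\gg N\cdot\poly(n)$. A random $k$ in a suitable range gives a prime with inverse-polynomial probability by a Dirichlet-type density bound, and primality is checked in polynomial time. Then $x^N-1$ splits over $\F_q$. Find a generator-like element $\zeta$ of order $N$, and fix $\omega_i=\zeta^{N/p_i}$ of order $p_i$; both are found by random sampling. Put $y_i=x^{N/p_i}$. By CRT, an $N$-th root of unity $\alpha$ corresponds to the residue vector of its exponent. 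Encode ``$x_i$ true'' as $y_i(\alpha)=1$ and ``$x_i$ false'' as $y_i(\alpha)=\omega_i$. Define the following sparse polynomials:
\begin{itemize}
\item for each variable, $B_i=(y_i-1)(y_i-\omega_i)$;
\item for each clause $C$, $S_C=\prod_{\ell\in C}(y_{i(\ell)}-c_\ell)$, where $c_\ell=1$ for a positive literal and $c_\ell=\omega_i$ for a negative one.
\end{itemize}
Each $S_C$ has at most $8$ terms and degree below $N$. The common roots of all $B_i$ and $S_C$ among the $N$-th roots of unity are then in bijection with the satisfying assignments of $\varphi'$. To get a single polynomial, let $g=\sum_j r_jP_j$ with uniformly random $r_j\in\F_q$, where the $P_j$ range over all the $B_i,S_C$, and let $f=x^N-1$. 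Since $f$ is squarefree, $\gcd(f,g)=\prod(x-\alpha)$ over the common roots, except when some $N$-th root of unity that is not a common root happens to be a root of $g$. For each fixed such root this has probability at most $1/q$, so a union bound gives failure probability at most $N/q$, which is negligible. Thus, with high probability, $\gcd(f,g)$ is $1$ if $\varphi'$ is unsatisfiable and is $x-\alpha$ for a single $\alpha$ otherwise. In both cases $T=\max(\|f\|_0,\|g\|_0,2)=\poly(n)$, and $\log D$ and $\log q$ are also $\poly(n)$.

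\emph{Step 3: running and verifying.} Run the assumed algorithm on $(f,g)$, aborting after its polynomial bound for these parameters. If it returns $x-\alpha$, read off the assignment by testing $\alpha^{N/p_i}\in\{1,\omega_i\}$, and check it against $\varphi$ directly; accept only if the check passes. This verification makes the procedure one-sided: it never accepts an unsatisfiable $\varphi$. On a satisfiable $\varphi$ it accepts with probability $\Omega(1/n)$, and polynomially many repetitions amplify this. Hence SAT $\in$ RP $\subseteq$ BPP.

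The main obstacle is the number-theoretic part of Step 2: provably finding a prime $q\equiv1\pmod N$ with $\log q$ polynomial in $n$ when $N$ is exponentially large. The first thing to try is to sample $k$ from a range of length $2^{\poly(n)}$ and invoke Linnik's theorem or a Bombieri--Vinogradov-style average bound to guarantee enough primes. A fallback is to weaken the requirement: work in an extension $\F_{p^m}$ with $N\mid p^m-1$, taking $m$ equal to the order of $p$ modulo $N$. This needs an explicit handle on that order, for instance by choosing the $p_i$ so that the multiplicative orders are controlled.
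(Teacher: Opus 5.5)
Your encoding, compression and GCD analysis match the paper's Lemmas~\ref{lm-8} and~\ref{lm-9}, with the residues $0,1$ playing swapped roles. Decoding $\alpha$ and checking the assignment against $\varphi$ in Step~3 is a nice refinement: it makes the error one-sided and disposes of off-promise calls without the paper's error bookkeeping. (A minor slip: $S_C$ can have degree above $N$, e.g.\ $N/2+N/3+N/5$, but the bound $3N/2$ is all you use.)

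The genuine gap is the one you flagged, and with your choices it is not merely unproved but beyond known unconditional results. You fix the modulus $N=p_1\cdots p_n$ to be the primorial, and claim that a random $k\le 2^{\poly(n)}$ makes $kN+1$ prime with inverse-polynomial probability. For one prescribed modulus of size $e^{\Theta(n\log n)}$ no such density bound is known:
\begin{itemize}
\item Siegel--Walfisz equidistribution requires $N\le(\log x)^{A}$, i.e.\ $\log q$ exponential in $n$.
\item Linnik's theorem only gives the \emph{existence} of a prime $q\equiv 1\pmod N$ below $N^{L}$. The available lower bounds for $\pi(x;N,1)$ with $x=N^{O(1)}$ lose a factor exponential in $n$ (about $\sqrt N$ in effective form), because a Siegel zero of a real character modulo $N$ cannot be ruled out. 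So good values of $k$ may have exponentially small density, while scanning $k$ up to $N^{L-1}$ is exponential.
\item Bombieri--Vinogradov-type statements control all moduli outside a small exceptional family and say nothing about a single fixed one. The primorial is the worst possible choice: it is divisible by every squarefree integer built from primes $\le p_n$, so nothing prevents an exceptional conductor from dividing it.
\end{itemize}
Under GRH your step would work, but the theorem is unconditional. The extension-field fallback only relocates the problem. You would need $m=\operatorname{ord}_N(p)=\operatorname{lcm}_i \operatorname{ord}_{p_i}(p)$ to be polynomial in $n$, nothing in your plan controls it, and for generic $p$ it is far too large. Then $\log q=m\log p$ is not polynomially bounded.

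The missing idea is to randomize the modulus, not just the cofactor. The paper takes its field from the BCR/AIRR construction (Lemma~\ref{lm-6}). There the primes $\ell_1,\dots,\ell_N$ are themselves outputs of a randomized procedure, together with $c$ and a prime $p=1+cM$, $M=\prod_j\ell_j$, $\log p=\poly(N)$. This is what allows prime-in-progression bounds valid for all but a few exceptional moduli to be used. Since $c$ is then no longer at your disposal, you cannot make $q/N$ large by taking $k$ large. The paper instead appends $r$ dummy primes with $2^r>1/\varepsilon$ and uses only the first $n$ primes for the variables, which gives $D/p<1/P_{\rm dum}<\varepsilon$ (Lemma~\ref{lm-7}). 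With this field substituted for your Step~2, the rest of your argument goes through, including the one-sided verification.
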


Open Problem 1 remains unsolved in the settings of integers.
Even for the finite field setting, we may further ask whether the problem is NP-hard? or PSPACE-hard? Or is the decision of whether a sparse polynomial is the GCD of two sparse polynomials NP-complete?

Furthermore, we show that the \textit{Roots of Unity Detection} problem is NP-hard; that is, given a sparse univariate polynomial \(f\) of degree bounded by \(D\), determining whether \(f\) vanishes at a root of unity whose order is at most \(D\) is NP-hard.
Equivalently, deciding whether \(\deg\gcd(f,g)>0\) is NP-hard for $g=x^n-1$ and a sparse polynomial $f$.  

\begin{theorem}
\label{thm-m2}
If there were an algorithm that, for  $f\in\F_q[x]$ and $g=x^n-1$ satisfying
\[
  \deg f\le D\qquad n\le D\qquad
  |\operatorname{supp}(f)|\le T,
\]
decides whether \(\deg\gcd(f,g)>0\) in bit complexity that is polynomial
in \(T\), \(\log D\), and \(\log q\), then \(\mathrm{NP}\subseteq\mathrm{BPP}\).
Thus, under the standard complexity assumption \(\mathrm{NP}\nsubseteq\mathrm{BPP}\), no
such algorithm exists.
\end{theorem}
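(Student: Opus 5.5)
We reduce from a known NP-complete problem via a Plaisted-style encoding. The variant we use is 3-SAT, or equivalently the question of whether a 3-CNF formula has a satisfying assignment. Let $\phi$ have variables $x_1,\dots,x_m$. We choose $m$ distinct primes $p_1,\dots,p_m$, all of size $O(\poly(m))$, set $N=\prod_i p_i$, and work with $N$-th roots of unity. By the Chinese Remainder Theorem, a residue $e \bmod N$ determines for each $i$ the residue $e \bmod p_i$. We encode $x_i=\text{true}$ as $e\equiv 0 \pmod{p_i}$, and $x_i=\text{false}$ as $e\not\equiv 0\pmod{p_i}$.

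For each clause $C$ on variables $i,j,k$ we build a sparse polynomial $h_C$ that vanishes at $\omega^e$ exactly when $C$ is satisfied, where $\omega$ is a primitive $N$-th root of unity. For instance, for the clause $x_i\vee x_j\vee x_k$, the falsifying assignments are the $e$ with $p_i\nmid e$, $p_j\nmid e$, $p_k\nmid e$. The polynomial
\[
  h_C = \frac{x^{N}-1}{\bigl(\text{factors supported on } p_i,p_j,p_k\bigr)}
\]
can be expressed through $x^{N/(p_ip_jp_k)}$ composed with a fixed polynomial in $\F_q[y]$ of degree $\le p_ip_jp_k=\poly(m)$. It is therefore sparse: it has $\poly(m)$ terms and exponents below $N$, which gives $\log D=O(m\log m)$. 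Negated literals are handled the same way by swapping the roles of ``divisible'' and ``not divisible''. Then
\[
  f=\sum_C r_C\, h_C
\]
with random coefficients $r_C$, intersected with $g=x^N-1$, has $\deg\gcd(f,g)>0$ if and only if $\phi$ is satisfiable, with high probability over the choice of the $r_C$. Here satisfiability means some $N$-th root of unity is a common zero of all $h_C$. The randomness comes in through the Schwartz–Zippel bound: we combine the clause polynomials by a random linear combination and pick $\F_q$ to be an extension large enough that the combination vanishes only at common roots. We also need $\gcd(q,N)=1$ and the ability to find the primes and the field, all of which can be done in randomized polynomial time. This is where BPP enters.

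The main obstacle is the precise design of $h_C$ so that its roots among the $N$-th roots of unity are \emph{exactly} the satisfying residues. Over a finite field, cyclotomic polynomials factor further, so we must argue about $N$-th roots of unity in $\overline{\F_q}$ directly rather than about irreducible factors. The approach is to write each condition ``$p_i\mid e$'' as vanishing of $x^{N/p_i}\cdot$ stuff. Concretely, $\omega^e$ is a root of $x^{N/p_i}-1$ iff $p_i\mid e$, and $\omega^e$ is a root of $\Phi$-type quotients $(x^N-1)/(x^{N/p_i}-1)$ iff $p_i\nmid e$. A disjunction of three literals corresponds to a product of three such polynomials. The product of the sparse factors $x^{N/p_i}-1$ is sparse (8 terms). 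The quotient factors $(x^N-1)/(x^{N/p}-1)=\sum_{t<p}x^{tN/p}$ have $p$ terms, so a product of three has at most $p_ip_jp_k$ terms, which is polynomial. This yields $h_C$ with $\poly(m)$ terms, and the zero set of $h_C$ among $N$-th roots of unity is exactly the set of $e$ satisfying $C$.

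Finally, a polynomial-time decider for $\deg\gcd(f,x^N-1)>0$ with $T=\poly(m)$, $\log D=\poly(m)$ and $\log q=\poly(m)$ would decide 3-SAT in randomized polynomial time, so NP $\subseteq$ BPP. The one-sided error from the random combination is bounded by $\#\text{roots}/|\F_q|$ via a union bound over the at most $N$ roots: for each root $\omega^e$ that is not a common zero of all the $h_C$, the value $\sum_C r_C h_C(\omega^e)$ is a nonzero linear form in the $r_C$. It therefore vanishes with probability $1/q$, so taking $q\gg N$ (i.e.\ $\log q=\poly(m)$) makes the total error small.
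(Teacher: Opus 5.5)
Your proof is correct, and it takes a genuinely leaner route than the paper's.

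The paper encodes $y_i$ by the residue $a\bmod\ell_i\in\{0,1\}$. This needs the Boolean-domain polynomials $B_i$ and literal factors $x^{E/\ell_i}-z_i$ whose coefficients $z_i=\omega^{E/\ell_i}$ are roots of unity. That forces $E\mid p-1$, hence the BCR/AIRR construction of $p=1+cM$ and the sampling of a primitive $E$-th root $\omega\in\F_p$. On top of that, the paper runs Valiant--Vazirani isolation over $\lceil 8A(s)\rceil$ trials, machinery carried over from Theorem~\ref{thm-m1}.

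Your Plaisted-style encoding ($p_i\mid e$ versus $p_i\nmid e$) buys three simplifications:
\begin{enumerate}
\item Every $N$-th root of unity is a legitimate assignment, so no $B_i$ are needed.
\item Your gadgets have $0/1$ coefficients, so the roots of unity may live in $\overline{\F_q}$. A random prime $q\in(100N,200N]$ (automatically coprime to $N$) suffices.
\item A satisfying assignment gives a common root of $f$ and $x^N-1$ for every choice of the $r_C$. The reduction therefore has one-sided error and needs neither isolation nor repetition.
\end{enumerate}

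The price is that a negated literal costs the $p_i$-term factor $\sum_{t<p_i}x^{tN/p_i}$. The primes must therefore be polynomially small, and a clause gadget can have $p_ip_jp_k$ terms instead of $8$. This is harmless here. However, the encoding does not transfer directly to Theorem~\ref{thm-m1}: one satisfying assignment corresponds to $\prod_{i\ \mathrm{false}}(p_i-1)$ roots, so the GCD is no longer $x-\alpha$ and need not be sparse.

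Two small fixes are needed. First, $\deg h_C$ can approach $3N$, so your exponents are not below $N$ as claimed. Either take $D=3N$, or reduce exponents modulo $N$; the latter preserves values on $N$-th roots of unity, does not increase the support, and even gives $\deg f<n$. Second, since $h_C(\omega^e)$ lies in an extension of $\F_q$, the per-root vanishing probability is at most $1/q$, not exactly $1/q$.
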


%\begin{theorem}
%\label{thm-m3}
%For  a sparse polynomial \(f\in\Z[x]\), and $g=x^n-1$, deciding whether \(\deg\gcd(f,g)>0\) is NP-hard.
%\end{theorem}

Theorem \ref{thm-m2} strengthens the previously known result that deciding whether two sparse polynomials are co-prime is NP-hard \cite[Theorem 3.3]{plaisted1984}.
Note that Theorem \ref{thm-m2} is not output-sensitive, so it cannot be deduced from Theorem \ref{thm-m1}, since $\gcd(x^n-1,f)$ could have exponential size, as shown in Example \ref{ex-GCD1}.
Theorem \ref{thm-m2} states that determining whether there exists a unit root that is a zero of $f$ is NP-hard, whereas \cite[Theorem 4.1]{plaisted1984} shows that deciding whether there exists a unit root that is not a zero of $f$ is NP-hard. Therefore, Theorem \ref{thm-m2} cannot be inferred from \cite[Theorem 4.1]{plaisted1984}.

%The main results presented in this paper were obtained by an artificial intelligence agent system, \textit{MechMath Agent Team (MMAT)}~\cite{MMAT}, under human guidance. 
The main results presented in this paper are obtained through an interactive collaboration between the authors and an artificial intelligence agent system, \textit{MechMath Agent Team (MMAT)}~\cite{MMAT}. 
The authors assume full responsibility for the paper’s content.
MMAT is a large language model driven agent designed to prove mathematical theorems expressed in both natural language and formal language in Lean. 
%
%Proofs of Theorems \ref{thm-m1} and \ref{thm-m2} are produced by  the natural language prover MechMath-NLProver of MMAT when the statements of the theorem are given as input.
%
%In this paper, the natural language prover MechMath-NLProver of MMAT is used.
%The proof for Theorem \ref{thm-m1} is automatically produced by MechMath-NLProver when the statement is given to it. 
%The proof for Theorem \ref{thm-m2} is produced by MechMath-NLProver in two steps: proofs of Theorem * and Theorem * are automatically generated by providing the statements to it.
%The correctness of the proofs is checked by the authors.

%The primary results of this paper were obtained using the AI agent \textit{MechMath Agent Team}~\footnote{\url{https://eonmath.github.io/mechmath}} under human guidance. 
%\textit{MechMath Agent Team} is a large language model (LLM) driven agent designed to prove mathematical theorems expressed in natural language by planning algebraic reductions, invoking external tools, and iteratively verifying intermediate claims. 
%In this study, the certificate verification phase was autonomously executed by \textit{MechMath Agent Team} via Python tool calls. 

\subsection{Related Work}

%However, the above result does incorporate the term bound of $\gcd(d,g)$, as required in Open Problem 1. 
%If this condition is omitted, computing the GCD of two sparse polynomials is clearly exponential, as illustrated by the following example.
%If the degree $D$ is allowed, there exist polynomial-time algorithm in $T$.

Most of the classic NP-hardness results concerning GCD computation are due to Plaisted \cite{plaisted1977sparse,zippel1979probabilistic,plaisted1984}. These include the NP-hardness of determining whether the GCD of sparse integer polynomials is nontrivial \cite[Theorem 3.3]{plaisted1984}, as well as the NP-completeness of deciding whether $x^N - 1$ fails to divide a given sparse polynomial \cite[Theorem 4.1]{plaisted1984}.
Non-trivial GCD is shown to be NP-hard over finite fields \cite{Gathen1993CC}.
Deciding the square-freeness of a sparse polynomial in $\Z[x]$ is NP-hard \cite{Karpinski1999CH}.
The Co-prime test of sparse polynomials over large finite fields is CoNP-hard~\cite{kaltofen2005complexity}.

For univariate polynomials represented by straight-line programs, an algorithm that runs in time polynomial in $\Size(f,g)$ and $D$ is provided in \cite{GCD-Kaltofen1988}.
Filaseta, Granville, and Schinzel give a sparse GCD algorithm that is linear in $\log D$ when one of the polynomials does not have cyclotomic-factors~\cite{FGS2008}.

Multivariate polynomial GCD computation is central to algebraic and symbolic computation.
The mainstream approach reduces the multivariate case to univariate computations via modular methods or Hensel lifting, followed by interpolation or lifting to recover the true GCD. 
Representative contributions include Brown's interpolation-based algorithm for dense polynomials \citep{brown1971euclid}, the EZ-GCD algorithm using Hensel lifting for sparse polynomials by Moses and Yun \citep{Moses1973EZ}, 
Zippel's probabilistic sparse GCD algorithm based on sparse interpolation and the Schwartz--Zippel lemma \citep{zippel1979probabilistic}.
Additionally, Kaltofen's algorithm for polynomials given by straight-line programs \citep{GCD-Kaltofen1988},  extends naturally to the multivariate case. Likewise, the black-box GCD algorithm of Kaltofen and Trager \citep{kaltofen1990computing} also applies to multivariate polynomials.
The parallel sparse GCD algorithm of Hu and Monagan combining Kronecker substitution with Ben-Or--Tiwari sparse interpolation \citep{hu2016fast}. 
These methods typically run in polynomial time, but often rely on randomization or additional assumptions to achieve deterministic complexity or explicit bit complexity bounds.
An algorithm of bit complexity $\widetilde{O}(nDT_G(T_f+T_g)\log^2 q\log^2 \log^2 1/\delta)$ is given over $\F_q$ in \cite{GCDHuangGao2025}.

The rest of the paper is organized as follows.
In Section \ref{sec-1}, the proof for Theorem \ref{thm-m1} is presented.
In Section \ref{sec-ur}, the proof for Theorem \ref{thm-m2} is presented.
In Section \ref{sec-conc}, conclusions are presented.

\section{Output-sensitive Sparse Polynomial GCD over Finite Field is NP-hard}
\label{sec-1}
In this section, we present the proof for Theorem \ref{thm-m1}. The proof consists of five steps given in the five subsections, respectively.

%\textbf{Idea of the Proof}.
We outline the five steps of the proof below.
The first step in Section \ref{sec-31} isolates the SAT instance, reducing it to a bounded-clause formula that is either unsatisfiable or, with inverse-polynomial probability, uniquely satisfiable. 
The second step in Section \ref{sec-32} constructs a suitable prime field and a system of roots of unity whose residue classes encode Boolean assignments.  
The third step in Section \ref{sec-33} builds sparse polynomials whose common roots among these roots of unity are exactly the satisfying assignments of the isolated formula. 
The fourth step in Section \ref{sec-34} compresses the whole list of constraint polynomials into one random sparse polynomial while preserving the relevant common roots with high probability. 
The final step in Section \ref{sec-35} checks the sparse size and degree bounds of the constructed instance and then amplifies the randomized reduction to obtain the claimed BPP consequence.

We use the sparse bit model over variable finite fields: a polynomial over
\(\mathbb F_q\) is represented by its nonzero coefficient--exponent pairs, the exponents are written in binary, and the field size is part of the input.  The construction below uses only prime fields.

\subsection{Boolean Isolation}
\label{sec-31}

\begin{definition}[Boolean variable and literal]
A \emph{Boolean variable} \(x\) takes values in \(\{0,1\}\) (false/true).  
A \emph{literal} is either a variable \(x\) (positive literal) or its negation \(\neg x\) (negative literal).
\end{definition}

\begin{definition}[Clause and CNF]
A \emph{clause} is a disjunction (logical OR) of literals.  
A formula is in \emph{conjunctive normal form} (CNF) if it is a conjunction (logical AND) of clauses.
\end{definition}

\begin{definition}[Satisfiability (SAT)]
Given a CNF formula \(F\) on \(n\) Boolean variables, the \emph{SAT} problem asks: does there exist an assignment to the variables such that \(F\) evaluates to true?  
Such an assignment, if it exists, is called a \emph{satisfying assignment}.
\end{definition}

The Cook–Levin theorem states that SAT is NP-complete. The result was proved independently by
%Stephen Cook (1971) and Leonid Levin (1973) 
\citet{cook1971complexity} and \citet{levin1973universal}.

\begin{theorem}[Cook--Levin Theorem]\label{the-4}
SAT is NP-complete: every problem in NP can be reduced to SAT in polynomial time, and SAT itself belongs to NP.
\end{theorem}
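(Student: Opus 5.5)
The plan has two parts: membership of SAT in NP, then NP-hardness by a generic reduction from an arbitrary language in NP. For membership, I will use the satisfying assignment itself as the certificate. Given a CNF formula \(F\) on \(n\) variables and an assignment in \(\{0,1\}^n\), a verifier scans each clause and checks that it contains a true literal. This takes time linear in the size of \(F\), and the certificate has length \(n \le |F|\). Hence SAT has a polynomial-size certificate that can be checked in polynomial time, so SAT \(\in\) NP.

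For hardness, fix \(L\in\mathrm{NP}\), decided by a nondeterministic single-tape Turing machine \(M\) in time \(p(m)\) on inputs of length \(m\), with \(p\) a polynomial. Given an input \(w\) with \(|w|=m\), I will build a CNF formula \(F_w\) that describes a \((p(m)+1)\times(p(m)+1)\) computation tableau, using the following Boolean variables:
\begin{itemize}
\item \(x_{t,i,s}\), meaning ``at time \(t\), tape cell \(i\) holds symbol \(s\)'';
\item \(h_{t,i}\), meaning ``at time \(t\), the head is at cell \(i\)'';
\item \(q_{t,k}\), meaning ``at time \(t\), the machine is in state \(k\)''.
\end{itemize}
The formula \(F_w\) is the conjunction of four groups of clauses:
\begin{enumerate}
\item \emph{Exactly-one constraints} for each \((t,i)\) on the symbols, and for each \(t\) on the head position and on the state. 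Each is a single ``at least one'' clause plus pairwise ``not both'' clauses.
\item \emph{Initial-configuration clauses}: unit clauses fixing row \(0\) to contain \(w\) padded with blanks, with the head at cell \(0\) and the start state.
\item \emph{Transition clauses}: for every time \(t\) and every cell \(i\), the local window around \((t,i)\) and \((t+1,i)\) must be consistent with some transition of \(M\), and cells away from the head must stay unchanged. Each such local constraint mentions a constant number of variables, so it can be written as a CNF of constant size by listing its forbidden assignments.
\item \emph{Acceptance clause}: a single clause saying some row is in an accepting state. I will make accepting configurations repeat, so it suffices to check the last row.
\end{enumerate}

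Correctness is then shown in both directions. If \(w\in L\), an accepting computation path yields a truth assignment that satisfies every clause. Conversely, a satisfying assignment decodes, row by row, into a sequence of configurations. The initial-configuration clauses fix the first one, and the transition clauses guarantee that each row is a legal successor of the previous one. So the assignment describes a valid accepting run, and \(w\in L\).

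Finally I will check that the reduction runs in polynomial time. There are \(O(p(m)^2)\) tableau cells, each contributing \(O(1)\) clauses of \(O(1)\) size, with constants depending only on \(M\). So \(F_w\) has size \(O(p(m)^2\log p(m))\) and is clearly computable from \(w\) in polynomial time. The main obstacle is the transition group: writing locality constraints that are simultaneously sound (every legal step is permitted) and complete (no illegal row change is permitted). This is especially delicate at the tape boundaries and at head movement. I would handle it with the standard \(2\times 3\) window formulation, where every legal window is enumerated from the transition relation of \(M\). The key lemma to prove is that if every \(2\times3\) window is legal, then each row is a successor configuration of the previous row, and I will argue it by case analysis on whether the head lies inside the window.
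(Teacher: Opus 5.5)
The paper gives no proof of this statement. It quotes the classical theorem of Cook and Levin and uses it only as a black box, to pass from \(\mathrm{SAT}\in\mathrm{BPP}\) to \(\mathrm{NP}\subseteq\mathrm{BPP}\).

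Your proposal is the standard tableau argument, and it is correct. Your variables for cell contents, head position and state are essentially those of Cook's original proof. You also produce a CNF formula directly, which is what the paper's definition of SAT requires.

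Two bookkeeping points:

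\begin{itemize}
\item The pairwise ``not both'' clauses for the head position cost \(\Theta(p(m)^2)\) clauses per row. So \(F_w\) has \(O(p(m)^3)\) clauses, not \(O(p(m)^2)\). This is still polynomial, so nothing breaks.
\item You keep the state in row-global variables \(q_{t,k}\), rather than writing it into the tape row as in the usual \(2\times 3\) window proof. Each window constraint must therefore also contain the state variables of rows \(t\) and \(t+1\), together with the head variables of its six cells. There are only constantly many of these, so your forbidden-assignment CNF encoding still has constant size. The key lemma also goes through unchanged: the window centred at the head cell determines the transition, and the exactly-one constraints rule out a second head.
\end{itemize}
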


\begin{definition}[Parsimonious reduction]
A reduction from SAT to a problem \(P\) is \emph{parsimonious} if it preserves the exact number of satisfying assignments — not only whether a satisfying assignment exists, but also the count.
\end{definition}

\begin{definition}[Unique-3SAT-style promise problem]
A CNF formula with clause size at most three is called a \emph{Unique-3SAT-style instance} under the promise that it has either zero or exactly one satisfying assignment. The promise problem asks to distinguish the two cases. 
\end{definition}

Formulas with more than one satisfying assignment may arise from failed randomized trials and lie outside the promise. For an exact three-literal display, repeated-literal padding is allowed: 
\[
(\lambda) \;\text{becomes}\; (\lambda \vee \lambda \vee \lambda), \qquad 
(\lambda \vee \mu) \;\text{becomes}\; (\lambda \vee \mu \vee \mu),
\]
which preserves the number of satisfying assignments.

We use the notion of randomized polynomial-time reduction as defined by Valiant and Vazirani in their seminal paper \citep{VV1986}. This notion allows a one-sided error: no-instances are always mapped to no-instances, while yes-instances are mapped to yes-instances with at least inverse-polynomial probability.

\begin{definition}[Randomized polynomial-time reduction \citep{VV1986}]\label{def-6}
A problem \(A\) is \emph{randomized polynomial-time reducible} to a problem \(B\) if there exists a randomized Turing machine \(T\) running in polynomial time and a polynomial \(P\) such that:
\begin{enumerate}
    \item \textbf{(No-instance correctness):} If \(x \notin A\), then for all coPin flips, \(T(x) \notin B\).
    \item \textbf{(Yes-instance correctness):} If \(x \in A\), then \(\Pr[T(x) \in B] \ge 1/P(|x|)\).
\end{enumerate}
Here \(|x|\) denotes the bit-length of the encoding of the input instance \(x\), i.e., the input size.
\end{definition}

Based on Theorem 1.1 of Valiant and Vazirani \citep{VV1986} and their notion of randomized polynomial-time reduction (see Definition \ref{def-6}), we obtain the following concrete isolation result, which serves as the core lemma in our reduction.

\begin{lemma}[Valiant--Vazirani isolation theorem \cite{VV1986}] \label{lm-11}
For a CNF formula \(F\) on \(n \ge 1\) Boolean variables, the randomized Valiant--Vazirani procedure adds random homogeneous linear equations over \(\mathbb{F}_2\) (equivalently, affine constraints after translating the selected cell), encoded by polynomial-size CNF gadgets. The satisfying assignments of the output are exactly those of \(F\) that lie in the selected affine subspace. Thus an unsatisfiable input remains unsatisfiable for every random choice. If \(F\) is satisfiable, the output has exactly one satisfying assignment with probability at least \(1/P(|F|)\), where \(P\) is a polynomial (depending only on the reduction) and \(|F|\) denotes the encoding length of \(F\).
\end{lemma}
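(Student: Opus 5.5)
We prove Lemma~\ref{lm-11} by replaying the Valiant--Vazirani argument and checking that the random linear constraints can be written as polynomial-size CNF gadgets that preserve the relevant solution set.

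\textbf{Step 1: sample the constraints.} On input $F$ with $n$ variables, choose $k\in\{2,\dots,n+1\}$ uniformly at random. Then choose independent uniform vectors $a_1,\dots,a_k\in\F_2^n$ and bits $b_1,\dots,b_k$. This defines the affine subspace
\[
  V=\{x:\langle a_i,x\rangle=b_i,\ i\le k\}.
\]
Taking $b=0$ gives the homogeneous version; the affine version is a translate of the selected cell.

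\textbf{Step 2: encode each constraint as a CNF gadget.} For each constraint $\langle a_i,x\rangle=b_i$, introduce fresh auxiliary variables $y_{i,0},\dots,y_{i,n}$. Add the clause $(\neg y_{i,0})$, the clauses forcing $y_{i,j}=y_{i,j-1}\oplus (a_{i,j}\wedge x_j)$, and a final clause forcing $y_{i,n}=b_i$. Each XOR constraint with two or three inputs is a constant number of clauses of size at most three. When $a_{i,j}=0$ the step is just the equality $y_{i,j}=y_{i,j-1}$.

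The key point is that the auxiliary variables are uniquely determined by $x$. So the satisfying assignments of $F\wedge\bigwedge_i(\text{gadget}_i)$ are in bijection with the satisfying assignments of $F$ that lie in $V$. The gadget is therefore parsimonious in the sense defined above, and its size is $O(kn)$, which is polynomial in $|F|$. This gives the claim that the output's satisfying assignments are exactly those of $F$ in $V$. It also gives the no-instance correctness of Definition~\ref{def-6}: if $F$ is unsatisfiable, the output is unsatisfiable for every random choice.

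\textbf{Step 3: isolation probability.} This is the main step. Let $S$ be the nonempty solution set of $F$, and let $k$ be the value with $2^{k-2}\le |S|\le 2^{k-1}$; this $k$ is chosen with probability $1/n$. For distinct $x,x'$, the events "$x\in V$" and "$x'\in V$" have joint probability $2^{-2k}$, since $a$ and $b$ are uniform, so pairwise independence holds. The single-element probability is $\Pr[x\in V]=2^{-k}$. By inclusion--exclusion (a Bonferroni bound),
\[
\Pr[|S\cap V|=1]\ \ge\ \sum_{x\in S}\Big(\Pr[x\in V]-\sum_{x'\ne x}\Pr[x,x'\in V]\Big)\ \ge\ |S|2^{-k}\big(1-|S|2^{-k}\big)\ \ge\ \tfrac14\cdot\tfrac12=\tfrac18.
\]
Combined with the $1/n$ chance of picking the right $k$, this gives success probability at least $1/(8n)$, and we can take $P(m)=8m$.

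If one insists on the homogeneous constraints ($b=0$), the vector $0$ breaks pairwise independence. In that case we handle it by first XOR-shifting the variables by a random vector, which is exactly the "translated cell" in the statement. Alternatively, we may simply cite \cite[Theorem~1.1]{VV1986} for this probability estimate.

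Finally, repeated-literal padding turns every clause into exactly three literals without changing the count of satisfying assignments.
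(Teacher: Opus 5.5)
Your proposal is correct and is the standard Valiant--Vazirani argument that the paper relies on. The paper gives no proof of this lemma beyond citing \cite[Theorem~1.1]{VV1986}, and in the proof of Lemma~\ref{lm-5} it describes the constraints as random affine parities encoded by polynomial-size CNF, so your parsimonious XOR-chain gadget and pairwise-independence/Bonferroni bound of $1/(8n)$ supply exactly the details the citation covers.

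The only soft spot is your aside on the homogeneous version. After a random shift $c$, the right-hand sides $b=Ac$ are uniform only on the column space of $A$ (and never on all of $\F_2^{k}$ when $k=n+1$). So the events are not exactly pairwise independent there, and the Bonferroni computation must be redone, although it still goes through up to constants. Your main argument, with independent uniform $b_i$, is unaffected.
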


\begin{lemma}[Randomized reduction to Unique-3SAT-style]\label{lm-5}
There exists a randomized polynomial-time reduction from SAT to the Unique-3SAT-style promise family with the following properties:
\begin{itemize}
  \item Unsatisfiable inputs always map to unsatisfiable formulas.
  \item Satisfiable inputs map, with probability at least \(1/P(n)\) (inverse polynomial in the input length), to formulas having exactly one satisfying assignment.
\end{itemize}
\end{lemma}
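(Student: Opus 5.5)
The plan is to compose the Valiant--Vazirani isolation of Lemma~\ref{lm-11} with a standard parsimonious conversion of CNF formulas into clauses of size at most three.

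Given a CNF formula $F$ on $n$ variables, first run the randomized procedure of Lemma~\ref{lm-11}. This yields a CNF formula $F'$ whose satisfying assignments are exactly those of $F$ lying in a randomly selected affine subspace of $\mathbb{F}_2^n$. By that lemma, $F'$ is unsatisfiable whenever $F$ is. If $F$ is satisfiable, then $F'$ has exactly one satisfying assignment with probability at least $1/P(|F|)$. The encoding length of $F'$ is polynomial in $|F|$, because the linear constraints are encoded by polynomial-size CNF gadgets.

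Second, convert $F'$ into a formula $G$ with clause width at most three, using a transformation that preserves the number of satisfying assignments.
\begin{itemize}
\item Clauses of size one or two are padded by literal repetition, as described before the promise definition; this preserves the count.
\item A long clause $(\lambda_1\vee\cdots\vee\lambda_k)$ with $k\ge 4$ is not handled by the usual chain splitting with free auxiliary variables, since that is not parsimonious. Instead, introduce auxiliary variables $y_1,\dots,y_{k-1}$ and impose the definitions $y_1\leftrightarrow\lambda_1$ and $y_i\leftrightarrow(y_{i-1}\vee\lambda_i)$, each written as a constant number of clauses of width at most three. Then add the unit clause $(y_{k-1}\vee\lambda_k)$, padded if necessary.
\end{itemize}
Since every $y_i$ is functionally determined by the original variables, each satisfying assignment of the original clause extends uniquely. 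So the number of satisfying assignments is preserved exactly, and the blow-up is linear.

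The same Tseitin-style idea also covers the parity gadgets of Lemma~\ref{lm-11}. A constraint $x_{i_1}\oplus\cdots\oplus x_{i_m}=b$ is encoded by chaining auxiliary variables $z_j\leftrightarrow z_{j-1}\oplus x_{i_j}$. Each link becomes four $3$-clauses, and the final step asserts $z_m=b$. This is again parsimonious, so one can apply it directly to the XOR constraints rather than to their CNF gadgets.

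Finally, verify the two bullet points. If $F$ is unsatisfiable, then $F'$ is unsatisfiable for every coin sequence, and hence so is $G$. If $F$ is satisfiable, then with probability at least $1/P(|F|)$ the formula $F'$ has exactly one solution, and so does $G$ by parsimony. The running time is polynomial, so the failure probability is measured against a polynomial in the input length $n=|F|$. This gives exactly the conditions of Definition~\ref{def-6}.

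The main obstacle is making sure the width-three conversion is truly parsimonious rather than merely satisfiability-preserving, since the textbook 3SAT reduction multiplies solution counts. The functional-definition (Tseitin) encoding resolves this, and each gadget should be checked to force the auxiliary variables uniquely.
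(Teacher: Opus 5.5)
Your proposal is correct and follows essentially the same route as the paper: apply the Valiant--Vazirani isolation of Lemma~\ref{lm-11}, then reduce clause width parsimoniously with a Tseitin chain of OR-definitions ($p_j\leftrightarrow p_{j-1}\vee\lambda_j$) closed by the clause $p_{m-1}\vee\lambda_m$, which preserves both unsatisfiability and uniqueness. The differences are cosmetic: the paper also adds a gate-by-gate Tseitin step for non-CNF inputs, starts its chain at $p_2\leftrightarrow\lambda_1\vee\lambda_2$ rather than $y_1\leftrightarrow\lambda_1$, and does not use your optional direct XOR encoding; also, $(y_{k-1}\vee\lambda_k)$ is a two-literal clause, not a unit clause.
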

\begin{proof}
First convert the input formula to a parsimonious CNF form. For each gate, introduce an auxiliary variable and add clauses that force it to equal the gate value:
\[
\begin{aligned}
y &\leftrightarrow (a \vee b): \quad 
(\neg y \vee a \vee b),\; (y \vee \neg a),\; (y \vee \neg b);\\
y &\leftrightarrow (a \wedge b): \quad 
(\neg y \vee a),\; (\neg y \vee b),\; (y \vee \neg a \vee \neg b);\\
y &\leftrightarrow \neg a: \quad 
(\neg y \vee \neg a),\; (y \vee a).
\end{aligned}
\]
For any assignment to original variables, the auxiliary variables are uniquely forced, and the output gate is set to true. Hence the number of satisfying assignments is preserved. The zero-variable case is handled directly by outputting a fixed unsatisfiable or uniquely satisfiable formula.

Now apply the Valiant--Vazirani isolation theorem. The randomized procedure adds random affine parity constraints over \(\mathbb{F}_2\), encoded as a polynomial-size CNF, so that the satisfying assignments of the output are exactly those of the input lying in the chosen affine subspace. Therefore:
\begin{itemize}
  \item An unsatisfiable input remains unsatisfiable for every random choice.
  \item If the input is satisfiable, the output has exactly one satisfying assignment with probability at least \(1/P(|F|)\).
\end{itemize}

\emph{Example.} Let \(F(x_1,x_2) = (x_1 \vee x_2)\), with satisfying assignments \((1,0), (0,1), (1,1)\). Choose the parity constraint \(x_1 = 0\) over \(\mathbb{F}_2\) (i.e., \((x_1,x_2)\cdot(1,0)=0\)), encoded by the unit clause \((\neg x_1)\). The isolated formula is
\[
G(x_1,x_2) = (x_1 \vee x_2) \wedge (\neg x_1),
\]
which retains only \((0,1)\). Thus \(G\) has exactly one satisfying assignment. Under the exact-three-literal convention this becomes
\[
(x_1 \vee x_2 \vee x_2) \wedge (\neg x_1 \vee \neg x_1 \vee \neg x_1).
\]

Finally, convert the output to 3CNF without changing the solution count. For each clause \(\ell_1 \vee \cdots \vee \ell_m\) with \(m \ge 4\), introduce fresh variables \(p_2,\ldots,p_{m-1}\). Add clauses:
\[
(\neg p_2 \vee \ell_1 \vee \ell_2),\quad (p_2 \vee \neg \ell_1),\quad (p_2 \vee \neg \ell_2)
\]
to force \(p_2 = \ell_1 \vee \ell_2\). For \(3 \le j \le m-1\), add clauses forcing \(p_j = p_{j-1} \vee \ell_j\). Finally add \((p_{m-1} \vee \ell_m)\). The auxiliary variables are uniquely determined when the original clause is true, and no extension exists when it is false. Clauses of length one or two are left as is, or padded to length three if required. Using disjoint auxiliary variables for different clauses yields a bijection between satisfying assignments before and after conversion.

Composing the Valiant--Vazirani isolation with this parsimonious 3CNF conversion gives the required randomized reduction.
\end{proof}

%\begin{proof}
%See Appendix~\ref{app:proof_lemma1}.
%\end{proof}

\subsection{Prime Field and Roots of Unity}
\label{sec-32}

The following lemma provides the prime numbers and modulus needed for our construction. It is derived from the BCR/AIRR prime-construction procedure, which runs in randomized polynomial time and produces distinct primes \(\ell_1,\dots,\ell_N\) together with a prime \(p\). These parameters will be used in the subsequent reduction.

\begin{lemma}[BCR/AIRR prime-construction procedure \citep{BI-SIAMJC2016}]\label{lm-6}
For any integer \(N \ge 4\), applying the Las Vegas randomized construction from Theorem 1.11 of Bi--Cheng--Rojas  to the dummy \(N\)-variable 3CNF formula
\[
B_N = \bigwedge_{j=1}^N (u_j \vee u_j \vee u_j)
\]
yields, in randomized polynomial time in \(N\), distinct primes \(\ell_1,\dots,\ell_N\), an integer \(c \ge 11\), and a prime
\[
p = 1 + cM, \qquad M = \prod_{j=1}^N \ell_j,
\]
with \(\log p = \operatorname{poly}(N)\). 
\end{lemma}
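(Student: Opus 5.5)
The approach is to treat the Bi--Cheng--Rojas construction as a black box and check that its output has the stated shape when the input is the dummy formula $B_N$. Theorem 1.11 of \citep{BI-SIAMJC2016} takes an $N$-variable 3CNF formula. In Las Vegas randomized time polynomial in the formula size, it outputs distinct primes $\ell_1,\dots,\ell_N$, one per variable, and a prime $p$ with $p\equiv 1 \pmod{\ell_j}$ for every $j$, of bit-size polynomial in $N$. Since $B_N$ has $N$ clauses of three literals, its encoding length is $O(N\log N)$. So the first step is simply to substitute $B_N$ and note that the running time and $\log p$ are $\operatorname{poly}(N)$. The clause structure of $B_N$ plays no role; only the number of variables matters.

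The second step is to extract the form $p=1+cM$. Because the $\ell_j$ are distinct primes and each divides $p-1$, their product $M=\prod_j\ell_j$ divides $p-1$. Setting $c=(p-1)/M$ therefore gives a positive integer with $p=1+cM$.

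The third step is the bound $c\ge 11$. In BCR/AIRR the prime is found by sampling $p=1+cM$ over an interval of multipliers $c$ and testing primality. I would follow that and restrict the sampled multipliers to $c\ge 11$, or equivalently discard the finitely many small values. This needs a count of primes in the progression $1 \bmod M$ over the sampling range, which the cited theorem provides via its Linnik/Bombieri--Vinogradov style estimates. Removing ten multipliers changes the success probability only negligibly, so the expected polynomial running time is unaffected.

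The $\ell_j$ are chosen of size $\operatorname{poly}(N)$, so $\log M=\tilde O(N)$, and the chosen range of $c$ keeps $\log p=\operatorname{poly}(N)$. The hypothesis $N\ge 4$ is only the threshold from which the cited density estimates hold.

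The main obstacle is making sure the cited theorem actually guarantees that the $\ell_j$ are distinct and that $M\mid p-1$, rather than only $\ell_j\mid p-1$ for a random subset of indices, and that $c$ can be forced to be at least $11$. If the theorem as stated only gives existence of such $p$ for most choices of the $\ell_j$, I would add an explicit loop: resample the $\ell_j$ until the search over $c$ succeeds, and verify primality of $p$ with a deterministic test (AKS) so the procedure stays Las Vegas.
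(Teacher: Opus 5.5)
Your proposal is correct and takes essentially the paper's approach: the paper gives no argument for this lemma beyond invoking Theorem 1.11 of Bi--Cheng--Rojas as a black box on the dummy formula $B_N$. Your extra details are sound — reading off $c=(p-1)/M$, discarding multipliers $c<11$, and certifying primality deterministically (resampling the $\ell_j$ if needed) to keep the procedure Las Vegas — and the bound $c\ge 11$ is never used downstream anyway, since Lemma~\ref{lm-7} only needs $c\ge 1$.
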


Let \(\Phi\) be a Unique-3SAT instance produced by the preceding reduction, with variables \(y_1, \ldots, y_n\), and let \(\varepsilon > 0\) be any inverse-polynomial error target. \(|\Phi|\) denotes the bit-length of the encoding of the formula \(\Phi\), i.e., the input size.

\begin{lemma}[Prime field and roots of unity construction]\label{lm-7}
Let \(\varepsilon = 1 / \operatorname{poly}(|\Phi|)\) be an inverse-polynomial error target. In randomized polynomial time (in \(|\Phi|\)), one can construct:
\begin{itemize}
    \item a prime field \(\mathbb{F}_p\);
    \item distinct primes \(\ell_1, \ldots, \ell_n\);
    \item their product \(D = \prod_{i=1}^n \ell_i\);
    \item a primitive \(D\)-th root of unity \(\omega \in \mathbb{F}_p\);
\end{itemize}
such that
\[
D \mid (p-1), \qquad \log p = \operatorname{poly}(|\Phi|), \qquad \frac{D}{p} < \varepsilon.
\]
Moreover, the constants \(z_i = \omega^{D / \ell_i}\), each of order \(\ell_i\), can be computed in randomized polynomial bit complexity.
\end{lemma}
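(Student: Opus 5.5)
We build the primes and the prime modulus together so that their product divides $p-1$ and is much smaller than $p$. Then we find $\omega$ by sampling elements of $\mathbb{F}_p^*$ and raising them to the power $(p-1)/D$.

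\textbf{Step 1: primes and modulus.} Let $N=\max(n,4)$ and apply Lemma~\ref{lm-6}. In randomized polynomial time in $N\le|\Phi|$ it gives distinct primes $\ell_1,\dots,\ell_N$ and a prime $p=1+cM$ with $\log p=\operatorname{poly}(N)$. Take the first $n$ primes and set $D=\prod_{i\le n}\ell_i$. Since $D\mid M$ and $M\mid p-1$, we get $D\mid p-1$.

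\textbf{Step 2: the ratio $D/p<\varepsilon$.} Lemma~\ref{lm-6} only guarantees $c\ge 11$, so $D/p\le M/p<1/c\le 1/11$. This is not enough for an arbitrary inverse-polynomial $\varepsilon$, and it is the step I expect to be the main obstacle. I would handle it with padding. Choose $N\ge n+\lceil\log_2(1/\varepsilon)\rceil$, which is still polynomial in $|\Phi|$, and use only $\ell_1,\dots,\ell_n$ in $D$. Then
\[
\frac{D}{p}<\frac{D}{M}=\prod_{j>n}\ell_j^{-1}\le 2^{-(N-n)}\le\varepsilon .
\]
The cost stays polynomial because $\log p=\operatorname{poly}(N)$.

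\textbf{Step 3: the root of unity.} Pick a uniform $a\in\mathbb{F}_p^*$ and set $\omega=a^{(p-1)/D}$ by repeated squaring. Because $\mathbb{F}_p^*$ is cyclic, $\omega$ is uniform among the $D$-th roots of unity, a cyclic group of order $D$. Therefore $\omega$ is primitive exactly when $\omega^{D/\ell_i}\ne 1$ for every $i$. Each such test is a single modular exponentiation, so it is polynomial-time.

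The failure probability of one trial is $1-\varphi(D)/D=1-\prod_i(1-1/\ell_i)$. This quantity could be large, so repeat the trial. The expected number of trials is $D/\varphi(D)=O(\log\log D)=O(\log n)$, since $\log D\le\log p=\operatorname{poly}$. The procedure is Las Vegas: any $\omega$ it outputs has been verified. Alternatively, one can cap the number of trials and fold the small failure probability into the overall error budget.

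\textbf{Step 4: the constants $z_i$.} Set $z_i=\omega^{D/\ell_i}$, computed by fast exponentiation. Its order is exactly $\ell_i$: we have $z_i^{\ell_i}=\omega^D=1$ and $z_i\ne 1$ because $\omega$ is primitive, and $\ell_i$ is prime. All arithmetic is on numbers of $\operatorname{poly}(|\Phi|)$ bits, so the whole construction runs in randomized polynomial bit complexity.
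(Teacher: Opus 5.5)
Your proposal is correct and follows essentially the same route as the paper: you pad the BCR/AIRR construction with at least $\lceil\log_2(1/\varepsilon)\rceil$ dummy primes (so effectively $N=\max(n+\lceil\log_2(1/\varepsilon)\rceil,4)$), which gives $D/p<1/P_{\rm dum}\le 2^{-(N-n)}\le\varepsilon$, and then you sample $\omega=a^{(p-1)/D}$, test $\omega^{D/\ell_i}\neq 1$, and set $z_i=\omega^{D/\ell_i}$. The only cosmetic difference is the bound on the number of trials: you use the Mertens-type estimate $D/\varphi(D)=O(\log\log D)$, while the paper uses the cruder elementary bound $\varphi(D)/D\ge (n+1)^{-2}$, which follows from $\ell_i\ge i+1$.
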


\begin{proof}
Choose a number \(r\) of dummy primes so that \(2^r>1/\varepsilon\), increasing
\(r\) if necessary to make \(N=n+r\ge4\).  Since \(\varepsilon\) is
inverse-polynomial, \(N\) is polynomial in the Boolean input size.  Apply the
BCR/AIRR prime-construction procedure just stated to the dummy \(N\)-variable
3CNF formula \(B_N\).  It gives, in Las Vegas randomized polynomial time,
distinct primes \(\ell_1,\ldots,\ell_N\), an integer \(c\ge 11\), and a prime
\[
  p=1+cM,\qquad M=\prod_{j=1}^N \ell_j,
\]
with \(\log p=\operatorname{poly}(N)\).

Only the first \(n\) primes are assigned to Boolean variables.  Put
\[
  D=\prod_{i=1}^n\ell_i,\qquad
  P_{\rm dum}=\prod_{j=n+1}^N\ell_j.
\]
Then \(M=DP_{\rm dum}\), so \(D\mid p-1\).  Also
\[
  \frac{D}{p}=\frac{D}{1+cDP_{\rm dum}}
       < \frac{1}{cP_{\rm dum}}
       \le \frac{1}{P_{\rm dum}}
       < \varepsilon,
\]
because every dummy prime is at least \(2\) and \(P_{\rm dum}\ge 2^r\).  The
dummy primes do not create Boolean coordinates later: all later encoding
polynomials use only the product \(D\) and the exponents \(D/\ell_i\) for
\(1\le i\le n\).

It remains to construct \(\omega\).  If \(D=1\), take \(\omega=1\).  Otherwise
sample \(a\in{\mathbb F}_p^*\) uniformly and set
\[
  \omega=a^{(p-1)/D}.
\]
Since \({\mathbb F}_p^*\) is cyclic of order \(p-1\), this samples uniformly
from the subgroup of \(D\)-th roots.  Using the known squarefree factorization
\(D=\prod_i\ell_i\), test whether
\[
  \omega^{D/\ell_i}\ne 1\qquad\text{for every }i.
\]
The test is equivalent to \(\omega\) having exact order \(D\).  A uniform
element of the subgroup has exact order \(D\) with probability
\[
  \frac{\varphi(D)}{D}=\prod_{i=1}^n\left(1-\frac{1}{\ell_i}\right).
\]
Since the \(\ell_i\) are distinct primes, after ordering them increasingly
\(\ell_i\ge i+1\), and the elementary estimate
\(\log(1-x)\ge -2x\) for \(0\le x\le 1/2\) gives
\[
  \frac{\varphi(D)}{D}\ge (n+1)^{-2}.
\]
Thus polynomially many trials find a primitive \(D\)-th root with high
probability, or Las Vegas expected polynomial time if we repeat until success.
All exponentiations use exponents of bit length at most \(\log p\), so their
bit cost is polynomial in \(n\) and \(\log p\). Finally, compute
\[
  z_i=\omega^{D/\ell_i}.
\]
Because \(D\) is squarefree and \(\omega\) has order \(D\), each \(z_i\) has
order \(\ell_i\). 
\end{proof}

%\section{Injective root encoding}
\subsection{Sparse Polynomial Encoding of Unique-3SAT}
\label{sec-33}

Before stating the lemma, we briefly describe the reduction from Unique-3SAT-style to a common-root problem for sparse polynomials. Using the prime field \(\mathbb{F}_p\), the primes \(\ell_i\), the product \(D\), and the root of unity \(\omega\) obtained from Lemma~\ref{lm-7}, we construct polynomials \(R\), \(B_i\), and \(P_C\) such that the satisfying assignments of the Unique-3SAT  instance \(\Phi\) correspond bijectively to the common roots of these polynomials in \(\mathbb{F}_p\). The polynomials are sparse: \(R\) has at most 2 terms, each \(B_i\) has at most 3 terms, and each clause polynomial \(P_C\) has at most 8 terms. All degrees are bounded by \(2D\). This yields a randomized polynomial-time reduction from Unique-3SAT to the problem of deciding whether a given set of sparse polynomials shares a common root (or nontrivial common factor) over \(\mathbb{F}_p\).

\begin{lemma}[Sparse polynomial encoding of Unique-3SAT]\label{lm-8}
Let \(\mathbb{F}_p\), \(\ell_1,\dots,\ell_n\), \(D = \prod_i \ell_i\), and \(\omega\) be as constructed in Lemma~\ref{lm-7}. For the Unique-3SAT instance \(\Phi\) on variables \(y_1,\dots,y_n\), one can construct in randomized polynomial time:
\[
R = x^D - 1, \qquad
B_i = (x^{D/\ell_i} - 1)(x^{D/\ell_i} - z_i),\quad z_i = \omega^{D/\ell_i},
\]
and clause polynomials \(P_C\) such that the common roots of \(R\), all \(B_i\), and all \(P_C\) in \(\mathbb{F}_p\) are in bijection with the satisfying assignments of \(\Phi\). Moreover,
\[
|\operatorname{supp}(R)| \le 2,\quad
|\operatorname{supp}(B_i)| \le 3,\quad
|\operatorname{supp}(P_C)| \le 8,
\]
and all degrees are at most \(2D\).
\end{lemma}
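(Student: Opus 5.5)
The plan is to use the roots of \(R\) as a coordinate system. Since \(\omega\) is a primitive \(D\)-th root of unity in \(\mathbb{F}_p\) and \(D\mid p-1\) (Lemma~\ref{lm-7}), the roots of \(R=x^D-1\) in \(\mathbb{F}_p\) are exactly the \(D\) distinct elements \(\omega^k\), \(k\in\mathbb{Z}/D\). For such a root,
\[
(\omega^k)^{D/\ell_i}=z_i^{k}=z_i^{\,k \bmod \ell_i},
\]
and \(z_i\) has exact order \(\ell_i\ge 2\), so \(z_i\neq 1\). Hence \(B_i(\omega^k)=0\) if and only if \(z_i^{k}\in\{1,z_i\}\), that is, if and only if \(k\bmod \ell_i\in\{0,1\}\). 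The \(\ell_i\) are distinct primes, so by the Chinese Remainder Theorem \(\mathbb{Z}/D\cong\prod_i\mathbb{Z}/\ell_i\). Therefore the common roots of \(R\) and all \(B_i\) are in bijection with \(\{0,1\}^n\) via \(\omega^k\mapsto (k\bmod \ell_i)_i\), and we read this as \(y_i=k\bmod\ell_i\). Equivalently, at such a root \(\xi\) we have \(\xi^{D/\ell_i}=z_i\) when \(y_i=1\), and \(\xi^{D/\ell_i}=1\) when \(y_i=0\).

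Next I would encode each clause as a product of literal factors. A positive literal \(y_i\) gets the factor \(x^{D/\ell_i}-z_i\), and a negative literal \(\neg y_i\) gets \(x^{D/\ell_i}-1\). On the \(2^n\) points above, each factor vanishes exactly when its literal is true. Set \(P_C\) equal to the product of the three factors for \(C=(\lambda_1\vee\lambda_2\vee\lambda_3)\). Then at an encoded assignment, \(P_C\) vanishes iff some literal of \(C\) is true. Two degenerate cases need no special treatment. Repeated-literal padding only repeats a factor, which is harmless. A clause containing both \(y_i\) and \(\neg y_i\) yields a product vanishing at every point, correctly reflecting a tautology. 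Consequently the common roots of \(R\), all \(B_i\), and all \(P_C\) are exactly the points \(\omega^k\) whose assignment satisfies every clause. These are in bijection with the satisfying assignments of \(\Phi\).

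Finally I would check the size bounds and computability. \(R\) has \(2\) terms. \(B_i=x^{2D/\ell_i}-(1+z_i)x^{D/\ell_i}+z_i\) has at most \(3\) terms and degree \(2D/\ell_i\le D\). Each \(P_C\) is a product of three binomials, so expanding gives at most \(2^3=8\) monomials. Its degree is at most \(3D/2\le 2D\), since each \(\ell_i\ge2\). All coefficients are products of at most three of the \(z_i\) and \(\pm1\), and the exponents are sums of the \(D/\ell_i\). So everything has polynomial bit size and is computable in (randomized, inherited from Lemma~\ref{lm-7}) polynomial time.

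The only step requiring real care is the first: verifying that \(B_i(\omega^k)=0\) depends only on \(k \bmod \ell_i\), and that the CRT correspondence gives injectivity of \(k\mapsto\) assignment on the common roots. This rests on \(\omega\) having exact order \(D\), the \(\ell_i\) being pairwise coprime, and \(z_i\neq1\).
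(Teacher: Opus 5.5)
Your proof is correct and takes the same route as the paper's. You parametrize the roots of \(R\) as \(\omega^k\), use that \(z_i\) has exact order \(\ell_i\) to get \(B_i(\omega^k)=0\) iff \(k \bmod \ell_i\in\{0,1\}\), apply CRT to obtain the Boolean domain, take \(P_C\) as the product of the literal factors \(x^{D/\ell_i}-z_i\) and \(x^{D/\ell_i}-1\), and read off the support and degree bounds from the expanded forms. Your remarks on repeated and complementary literals are a small addition that the paper leaves implicit.
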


\begin{proof}
For a positive literal \(y_i\), put
\[
  L_{i,1}(x)=x^{D/\ell_i}-z_i,
\]
and for a negative literal \(\neg y_i\), put
\[
  L_{i,0}(x)=x^{D/\ell_i}-1.
\]
For a clause \(C=\lambda_1\vee\lambda_2\vee\lambda_3\), define
\[
  P_C(x)=L_{\lambda_1}(x)L_{\lambda_2}(x)L_{\lambda_3}(x),
\]
where \(L_\lambda\) is the corresponding literal factor.

The roots of \(R\) in \({\mathbb F}_p\) are exactly the elements
\(\omega^m\), indexed by \(m\in{\mathbb Z}/D{\mathbb Z}\).  Indeed, \(\omega\)
has order \(D\), so these \(D\) elements are distinct roots of \(x^D-1\), and a
nonzero polynomial of degree \(D\) over a field has at most \(D\) roots.

For \(x=\omega^m\), we have
\[
  x^{D/\ell_i}=(\omega^{D/\ell_i})^m=z_i^m.
\]
Since \(z_i\) has order \(\ell_i\),
\[
  B_i(\omega^m)=(z_i^m-1)(z_i^m-z_i)=0
\]
if and only if \(m\equiv 0\) or \(1\pmod{\ell_i}\).  Therefore the common roots
of \(R,B_1,\ldots,B_n\) are exactly the roots \(\omega^m\) whose residues
\((m\bmod \ell_i)_i\) lie in \(\{0,1\}^n\).  By the Chinese remainder theorem,
these roots are in bijection with Boolean assignments, using residue \(1\) for
\(y_i=\mathrm{true}\) and residue \(0\) for \(y_i=\mathrm{false}\).

On such a Boolean-domain root, \(L_{i,1}\) vanishes exactly at residue \(1\),
hence exactly when the positive literal \(y_i\) is true.  Similarly,
\(L_{i,0}\) vanishes exactly at residue \(0\), hence exactly when the negative
literal \(\neg y_i\) is true.  Since \({\mathbb F}_p\) has no zero divisors,
\(P_C\) vanishes exactly when at least one of its three literal factors
vanishes, i.e. exactly when the clause \(C\) is satisfied.  Adding all equations
\(P_C=0\) therefore cuts out exactly the satisfying assignments.

The support and degree estimates are immediate from the displayed forms.
The polynomial \(R\) has at most two terms.  Expanding
\[
  B_i=x^{2D/\ell_i}-(1+z_i)x^{D/\ell_i}+z_i
\]
shows at most three terms, with degree at most \(2D/\ell_i\le D\).  Each
clause polynomial is a product of three binomials, so before collection it has
at most \(2^3=8\) monomial products; collection and cancellation cannot
increase support.  Its degree is at most
\[
  \frac{D}{\ell_{i_1}}+\frac{D}{\ell_{i_2}}+\frac{D}{\ell_{i_3}}
  \le \frac{3D}{2}\le 2D,
\]
because all \(\ell_i\ge2\).
\end{proof}

\subsection{Random Pair Compression and Sparse GCD}
\label{sec-34}
We now show that the common-root problem for the collection of sparse polynomials \(R, F_1, \dots, F_m\) (where \(F_j\) ranges over all \(B_i\) and \(P_C\) from Lemma~\ref{lm-8}) can be reduced to the problem of finding the common root of just two polynomials: \(R\) and a random linear combination \(H = \sum_{j=1}^m u_j F_j\). By choosing the coefficients \(u_j \in \mathbb{F}_p\) independently and uniformly at random, we obtain with high probability that the zero set of \(H\) together with \(R\) coincides with the common zero set of all polynomials. Moreover, the resulting polynomial \(H\) remains sparse with support size polynomial in \(|\Phi|\), and the construction runs in randomized polynomial time.

\begin{lemma}[Random linear combination preserves common roots]\label{lm-9}
Let \(F_1, \dots, F_m\) be the list of all polynomials \(B_i\) and \(P_C\) from Lemma~\ref{lm-8}. Choose independent uniform coefficients \(u_j \in \mathbb{F}_p\) and define
\[
H = \sum_{j=1}^m u_j F_j.
\]
Then, with probability at least \(1 - D/p\),
\[
Z(R, H) = Z(R, F_1, \dots, F_m)
\]
as subsets of \(\mathbb{F}_p\), where \(Z(\cdot)\) denotes the set of common roots in \(\mathbb{F}_p\). On this event:
\begin{itemize}
    \item If the Unique-3SAT instance \(\Phi\) has no satisfying assignment, then \(\gcd(R, H) = 1\).
    \item If \(\Phi\) has a unique satisfying assignment corresponding to \(\alpha \in \mathbb{F}_p\), then \(\gcd(R, H) = x - \alpha\).
\end{itemize}
In particular, the GCD has at most two terms (i.e., is a constant or a linear binomial). Moreover, \(H\) is a sparse polynomial with \(|\operatorname{supp}(H)| \le \sum_{j=1}^m |\operatorname{supp}(F_j)|= \operatorname{poly}(|\Phi|)\), and the construction runs in randomized polynomial time.
\end{lemma}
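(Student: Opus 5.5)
The plan is a union bound over the roots of \(R\), with a Schwartz--Zippel-type argument for a single linear form at each root.

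First, the inclusion \(Z(R,F_1,\dots,F_m)\subseteq Z(R,H)\) holds for every choice of the \(u_j\). A common zero of all the \(F_j\) is a zero of any linear combination of them.

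For the reverse inclusion, I would fix one root \(\beta=\omega^k\) of \(R\) that is \emph{not} a common zero of all \(F_j\). By Lemma~\ref{lm-8}, the roots of \(R\) in \(\mathbb F_p\) are exactly the \(D\) elements \(\omega^k\). For the fixed \(\beta\), the vector \(v=(F_1(\beta),\dots,F_m(\beta))\in\mathbb F_p^m\) is nonzero. So \(H(\beta)=\sum_j u_jF_j(\beta)\) is a nonzero linear form in the uniform vector \(u\). Say \(F_{j_0}(\beta)\ne0\), and condition on all other \(u_j\). Then \(H(\beta)=0\) holds for exactly one value of \(u_{j_0}\), so \(\Pr[H(\beta)=0]=1/p\). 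A union bound over at most \(D\) such bad \(\beta\) gives failure probability at most \(D/p\). Hence \(Z(R,H)=Z(R,F_1,\dots,F_m)\) with probability at least \(1-D/p\). This is the step where one must be careful that only the finitely many roots of \(R\) in \(\mathbb F_p\) matter. That is automatic, because \(Z(R,H)\subseteq Z(R)\) and \(|Z(R)|=D\).

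Next I would translate zero sets into the gcd. Since \(D\mid p-1\) and \(\omega\) has order \(D\), \(R=x^D-1=\prod_{k}(x-\omega^k)\) splits into distinct linear factors over \(\mathbb F_p\); in particular it is squarefree. Therefore \(\gcd(R,H)=\prod_{\beta\in Z(R,H)}(x-\beta)\). On the good event this product runs over \(Z(R,F_1,\dots,F_m)\). By Lemma~\ref{lm-8}, that set is in bijection with the satisfying assignments of \(\Phi\). So the gcd is \(1\) if \(\Phi\) is unsatisfiable, and \(x-\alpha\) when \(\Phi\) has exactly one satisfying assignment, where \(\alpha=\omega^m\) with \(m\) the CRT residue encoding that assignment. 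In both cases the gcd has at most two terms. If \(H=0\), the gcd would be \(R\) itself; this case is excluded on the good event, because \(Z(R,H)\) then has at most one element while \(D\ge2\).

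Finally, sparsity and running time. The support of \(H\) is contained in \(\bigcup_j\supp(F_j)\), so \(|\supp(H)|\le\sum_j|\supp(F_j)|\le 3n+8\cdot\#\text{clauses}=\operatorname{poly}(|\Phi|)\). Computing \(H\) requires the following:
\begin{itemize}
\item sampling \(m\) field elements;
\item expanding each \(F_j\) into at most \(8\) terms, with exponents of bit length \(O(\log D)\le\log p\) and coefficients products of the \(z_i\);
\item merging like exponents.
\end{itemize}
All of this is polynomial in \(m\) and \(\log p=\operatorname{poly}(|\Phi|)\).

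The only genuinely delicate point I anticipate is making sure the union bound is over \(\le D\) points rather than over all of \(\mathbb F_p\). The splitting of \(R\) from Lemma~\ref{lm-7} handles this.
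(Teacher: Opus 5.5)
Your proposal is correct and follows essentially the same route as the paper: the inclusion $S\subseteq Z(R,H)$, conditioning on a single coefficient $u_{j_0}$ with $F_{j_0}(\beta)\neq 0$ to get probability $1/p$, the union bound over the at most $D$ roots of $R$, the identity $\gcd(R,H)=\prod_{\beta\in Z(R,H)}(x-\beta)$ from $R$ splitting into distinct linear factors, and the $3n+8k$ support count. The differences are cosmetic: the paper also checks squarefreeness via $R'=Dx^{D-1}$, and your aside about $H=0$ is harmless but unnecessary, since the product formula already gives $\gcd(R,0)=R$.
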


\begin{proof}
Fix a root \(\alpha\) of \(R\) which is not a common root of all constraints
\(F_j\).  Then the value vector
\[
  (F_1(\alpha),\ldots,F_m(\alpha))
\]
is nonzero.  Choose \(s\) with \(F_s(\alpha)\ne0\).  After all coefficients
\(u_j\) with \(j\ne s\) are fixed, the equation \(H(\alpha)=0\) is a single
linear equation
\[
  u_sF_s(\alpha)+c=0
\]
in the uniform variable \(u_s\), and it has exactly one solution in
\({\mathbb F}_p\).  Hence
\[
  \Pr[H(\alpha)=0]=1/p.
\]

Let \(S=Z(R,F_1,\ldots,F_m)\) and \(T_R=Z(R)\).  Always \(S\subseteq Z(R,H)\).
The reverse inclusion fails only if some \(\alpha\in T_R\setminus S\) also
satisfies \(H(\alpha)=0\).  Since \(R\) has degree \(D\), \(|T_R|\le D\), and
the union bound gives
\[
  \Pr[Z(R,H)\ne S]\le D/p.
\]

Next \(R=x^D-1\) is squarefree over \({\mathbb F}_p\).  Since \(D\mid p-1\),
the characteristic \(p\) does not divide \(D\).  Therefore
\[
  R'(x)=Dx^{D-1}
\]
is nonzero at every root of \(R\), because every root of \(x^D-1\) is nonzero.
The primitive \(D\)-th root \(\omega\) shows that \(R\) splits into the distinct
linear factors \(x-\beta\) for \(\beta\in T_R\).  Consequently
\[
  \gcd(R,H)=\prod_{\beta\in T_R,\ H(\beta)=0}(x-\beta)
\]
as a monic polynomial.

On the successful compression event this product is indexed by
\(\beta\in Z(R,F_1,\ldots,F_m)\).  By Lemma \ref{lm-8}, that zero set is in bijection
with satisfying assignments.  Hence in the zero-satisfying-assignment case the
GCD is the empty product \(1\).  In the unique-satisfying-assignment case the
zero set is \(\{\alpha\}\), so the GCD is \(x-\alpha\).  Since
\(\alpha^D=1\), \(\alpha\ne0\), and \(x-\alpha\) has exactly the two exponents
\(1\) and \(0\).

Finally, \(|\operatorname{supp}(R)|\le2\).  The sum defining \(H\) has at most
\(3n+8k\) monomial contributions before collection, because each \(B_i\) has
at most three terms and each \(P_C\) has at most eight.  Collection and cancellation cannot create new exponents, so
\[
  |\operatorname{supp}(H)|\le 3n+8k.
\]
Every \(F_j\) has degree at most \(2D\), so \(\deg H\le2D\).  
Thus the final pair \((R,H)\) has degree bound \(\Delta=2D\), input term bound
\(\max(2,3n+8k)\), and GCD support at most \(2\) in the promised zero-or-one
cases.
\end{proof}

\subsection{A Promised Sparse-GCD Solver Implies \(\mathrm{NP}\subseteq\mathrm{BPP}\)}
\label{sec-35}

\begin{theorem}\label{the-10}
Assume there is a total algorithm \(A_{\gcd}\) which runs in bit time
polynomial in \(T\), \(\log\Delta\), and \(\log q\), and which is correct on
every promised input pair over \({\mathbb F}_q\) satisfying
\[
  \deg f,\deg g\le\Delta,\qquad
  |\operatorname{supp}(f)|,\ |\operatorname{supp}(g)|,\
  |\operatorname{supp}(\gcd(f,g))|\le T.
\]
Then SAT is in BPP, hence \(\mathrm{NP}\subseteq\mathrm{BPP}\).
\end{theorem}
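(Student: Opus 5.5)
We compose the reductions of Lemmas~\ref{lm-5}, \ref{lm-7}, \ref{lm-8} and \ref{lm-9} into a single randomized procedure that maps a CNF formula \(F\) to a pair \((R,H)\) over \(\mathbb F_p\). We then use \(A_{\gcd}\), with a clock, as a decision oracle on this pair. Finally we amplify the one-sided success probability to obtain a BPP algorithm for SAT. SAT is NP-complete by Theorem~\ref{the-4}, so this gives \(\mathrm{NP}\subseteq\mathrm{BPP}\).

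\textbf{Step 1: the reduction.} Given \(F\), apply Lemma~\ref{lm-5} to get a 3CNF formula \(\Phi\) on \(n\) variables with \(k\) clauses. Choose \(\varepsilon=1/4\), or any fixed inverse polynomial, and run Lemma~\ref{lm-7} to obtain \(p,\ell_i,D,\omega,z_i\) with \(D/p<\varepsilon\). Build \(R\), the \(B_i\) and the \(P_C\) via Lemma~\ref{lm-8}, and form \(H=\sum_j u_jF_j\) with uniform \(u_j\) as in Lemma~\ref{lm-9}. Set \(\Delta=2D\) and \(T=\max(2,3n+8k)\). Then \(\log\Delta\le\log p=\operatorname{poly}(|\Phi|)\), \(\log q=\log p\), and \(T=\operatorname{poly}(|\Phi|)\). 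Hence the promised running time of \(A_{\gcd}\) on \((R,H)\) is bounded by an explicit polynomial \(Q(|F|)\).

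\textbf{Step 2: deciding from one trial.} Run \(A_{\gcd}(R,H)\) for at most \(Q(|F|)\) steps. Answer ``satisfiable'' exactly when the output \(G\) is a monic polynomial of degree \(\ge1\) satisfying \(G\mid R\) and \(G\mid H\), and \(G\) has a root in \(\mathbb F_p\) that satisfies \(\Phi\). This check is the main subtlety. When \(\Phi\) has several satisfying assignments, or when compression fails, the input may lie outside the promise and the algorithm's output cannot be trusted. The simplest robust check is the following: if the output is \(x-\alpha\), decode \(\alpha\) into an assignment via \(z_i\)-residues. That is, for each \(i\) test whether \(\alpha^{D/\ell_i}\) equals \(1\) or \(z_i\). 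Then check that the resulting assignment satisfies \(\Phi\), and hence satisfies \(F\) after projecting away the auxiliary variables. This costs polynomially many modular exponentiations.

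This gives one-sided error. If \(F\) is unsatisfiable, no verified satisfying assignment can ever be produced, so the answer is always ``no'' regardless of what \(A_{\gcd}\) outputs. If \(F\) is satisfiable, then with probability at least \(1/P(|F|)\) the formula \(\Phi\) is uniquely satisfiable by Lemma~\ref{lm-5}. Independently, with probability at least \(1-D/p\ge 3/4\), the compression of Lemma~\ref{lm-9} succeeds. On that event the input \((R,H)\) satisfies the promise with gcd support \(2\le T\), so \(A_{\gcd}\) correctly returns \(x-\alpha\), which decodes to the satisfying assignment. One trial therefore succeeds with probability at least \(\tfrac{3}{4P(|F|)}\).

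\textbf{Step 3: amplification.} Repeat the trial \(4P(|F|)\) times independently and accept if any trial accepts. Unsatisfiable inputs are never accepted. Satisfiable inputs are rejected with probability at most \((1-\tfrac{3}{4P})^{4P}\le e^{-3}<1/3\). The total running time is polynomial, so SAT lies in RP\(\subseteq\)BPP.

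The step I expect to be most delicate is the handling of out-of-promise inputs. The fix is to treat \(A_{\gcd}\) as a clocked, total algorithm and to rely only on the verification of a decoded assignment.
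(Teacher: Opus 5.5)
Your proof is correct, but it handles the off-promise calls differently from the paper.

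The paper uses only one bit of the oracle's output: whether the returned GCD has positive degree. It therefore has to pay for off-promise calls. A failed compression on an unsatisfiable instance is charged as a possible false positive. This is kept small by taking $\varepsilon=\gamma=1/(100r)$ over $r=\lceil 8A(s)\rceil$ trials and applying a union bound, which gives two-sided error.

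You instead accept only a linear output $x-\alpha$ that decodes, via the tests $\alpha^{D/\ell_i}\in\{1,z_i\}$, to an assignment you verify satisfies $\Phi$. By the no-instance correctness of Lemma~\ref{lm-5}, this can never happen when $F$ is unsatisfiable, whatever the clocked oracle returns. Your route buys the following:
\begin{itemize}
\item One-sided error, giving $\mathrm{SAT}\in\mathrm{RP}$ directly.
\item A constant $\varepsilon$ such as $1/4$ suffices, instead of $1/(100r)$.
\item Soundness no longer depends on the oracle's off-promise behaviour. A randomized oracle, or a cutoff on the Las Vegas construction of Lemma~\ref{lm-7}, simply becomes an extra rejecting event that lowers the per-trial success probability by a constant factor.
\end{itemize}
What the paper's cruder degree test buys is that it uses only decision information. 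The same amplification argument is therefore reused essentially verbatim in Lemma~\ref{lem:amplification} for the coprimality problem of Theorem~\ref{thm-RUD-F}, where the oracle returns no root you could decode and verify.

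One small correction: drop your first, general formulation of the check ($G\mid R$, $G\mid H$, and $G$ has a satisfying root in $\mathbb{F}_p$). For a sparse $G$ of exponential degree, those divisibility and root-finding tests are not obviously polynomial-time. The linear-output check you actually use suffices, because on the good event (unique satisfiability and successful compression) the oracle's output is exactly $x-\alpha$.
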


\begin{proof}
If \(A_{\gcd}\) is randomized, amplify it on promised calls so that the binary
test ``does the returned sparse GCD have positive degree?'' has error at most
\(\gamma\), using \(O(\log(1/\gamma))\) independent runs and majority vote.
For deterministic \(A_{\gcd}\), take \(\gamma=0\).  The algorithm is total, so
even off-promise calls halt in the same polynomial time, though their outputs
are not trusted.

Let \(\Phi\) be a SAT instance of length \(s\).  By Lemma \ref{lm-5}, for some polynomial
\(A(s)\), one Valiant--Vazirani trial maps a satisfiable \(\Phi\) to a
Unique-3SAT-style formula with exactly one satisfying assignment with
probability at least \(1/A(s)\), while an unsatisfiable \(\Phi\) always maps to
an unsatisfiable formula.  Set
\[
  r=\lceil 8A(s)\rceil,\qquad
  \varepsilon=\gamma=\frac{1}{100r}.
\]
Run \(r\) independent trials.  In one trial, apply Lemma \ref{lm-5} to get a formula
\(\Psi\).  Apply Lemma \ref{lm-7} to \(\Psi\) with error target \(\varepsilon\), giving
\({\mathbb F}_p,D,\omega\) with \(D/p\le\varepsilon\) and \(\log p\)
polynomial in \(s\).  This field-construction call is exactly the
BCR/AIRR prime-construction procedure stated in Lemma \ref{lm-6}, with dummy primes
used only to force \(D/p<\varepsilon\).  Construct \(R\) and the constraints as
in Lemma \ref{lm-8}, form the random compressed polynomial \(H\) as in Lemma \ref{lm-9}, and call
\(A_{\gcd}\) on \((R,H)\) with
\[
  \Delta=2D,\qquad T=\max(2,3n+8k),
\]
where \(n,k\) are the numbers of variables and clauses of \(\Psi\).  Accept the
trial exactly when the amplified output-degree test is positive.  Accept
\(\Phi\) if any trial accepts.

This is polynomial time.  The Valiant--Vazirani output has polynomial size, so
\(n\) and \(k\) are polynomial in \(s\).  The BCR/AIRR procedure used in
Lemma \ref{lm-6} gives \(\log p=\operatorname{poly}(N)\), and Lemma \ref{lm-6} gives
\(\log D\) polynomial in \(s\); hence \(\log\Delta\) is polynomial.  Lemma \ref{lm-8}
gives
\[
  |\operatorname{supp}(R)|\le2,\qquad
  |\operatorname{supp}(H)|\le3n+8k,\qquad
  \deg R,\deg H\le2D,
\]
so constructing the sparse pair and reading the solver's sparse output both
take polynomial bit time.  The assumed solver runs in polynomial time in these
parameters.

If \(\Phi\) is unsatisfiable, every \(\Psi\) is unsatisfiable.  On a trial where
the compression event of Lemma \ref{lm-9} succeeds, \(\gcd(R,H)=1\), and the call is a
promised call with negative degree answer.  The degree test errs with
probability at most \(\gamma\).  If compression fails, the solver call may be
off promise and we charge the trial as a possible false positive.  Since
\(\Pr[\text{compression failure}]\le D/p\le\varepsilon\), one trial accepts
with probability at most \(\varepsilon+\gamma\).  A union bound over \(r\)
trials gives false acceptance probability at most
\[
  r(\varepsilon+\gamma)=1/50<1/3.
\]

If \(\Phi\) is satisfiable, one trial produces a uniquely satisfiable
\(\Psi\) with probability at least \(1/A(s)\).  Conditional on that event,
Lemma \ref{lm-9} succeeds with probability at least \(1-\varepsilon\), and then
\(\gcd(R,H)=x-\alpha\) for the corresponding root \(\alpha\).  The promised
degree test is positive, except with probability at most \(\gamma\).  Allowing
also the bounded probability that a randomized field-construction subroutine
has to skip a trial, the conditional acceptance probability is at least
\[
  1-2\varepsilon-\gamma\ge 97/100\ge 1/2.
\]
Thus one trial accepts with probability at least \(1/(2A(s))\).  Independence
across trials gives rejection probability at most
\[
  \left(1-\frac{1}{2A(s)}\right)^r\le \exp(-r/(2A(s)))\le e^{-4}<1/3.
\]
Therefore SAT has a BPP algorithm.  Since every language in NP reduces
deterministically in polynomial time to SAT by the Cook--Levin theorem (Theorem \ref{the-4})
\cite{cook1971complexity, levin1973universal}, \(\mathrm{NP}\subseteq
\mathrm{BPP}\).
\end{proof}

\section{Roots of Unity Detection Is NP-hard over Finite Field}
\label{sec-ur}
By \textit{Roots of Unity Detection}, we refer to the following problem: given a sparse polynomial \(f \in \mathbb{F}_q[x]\) of degree \(D\), determine whether \(f(x)\) has, as a root, a root of unity whose order is at most \(D\).
%In this section, we show that Roots of Unity Detection Is NP-hard.
In this section, we present the proof for Theorem \ref{thm-m2}, that is, Roots of Unity Detection over finite field is NP-hard. 
%The proof consists of four steps given in the four subsections, respectively.
We first present the main result and then the lemmas.

%\subsection{Roots of Unity Detection Is NP-hard over Finite Field}
%In this section, we will prove the following theorem.
\begin{theorem}
\label{thm-RUD-F}
The following decision problem is hard in the randomized-reduction/BPP-consequence sense:
given a sparse polynomial \(f\in \F_q[x]\), an integer \(n\), the polynomial
$g=x^n-1,$
and a degree bound \(D\) with
\[
  \deg f\le D,\qquad n\le D,
\]
decide whether \(f\) and \(g\) are coprime. Equivalently, decide whether
$  \deg \gcd(f,g)>0.$
More precisely, a polynomial-time algorithm for this problem would
imply \(\mathrm{SAT}\in\mathrm{BPP}\), and therefore
$\mathrm{NP}\subseteq\mathrm{BPP}$.
\end{theorem}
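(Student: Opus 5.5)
The plan is to reuse the pipeline of Section~\ref{sec-1} almost verbatim. The decision problem of Theorem~\ref{thm-RUD-F} needs only the bit ``$\deg\gcd(R,H)>0$'', so the output-size issue that forced the promise $|\operatorname{supp}(\gcd)|\le T$ in Theorem~\ref{the-10} disappears. Given a SAT instance, I will first convert it parsimoniously to 3CNF as in the proof of Lemma~\ref{lm-5}, producing a formula $\Psi$ on $n$ variables with $k$ clauses. Valiant--Vazirani isolation is not needed here: uniqueness was only used to keep the GCD sparse, and a decision oracle does not care about the size of the GCD. I will then apply Lemma~\ref{lm-7} with error target $\varepsilon=1/10$ (any constant works), which gives $\F_p$, the primes $\ell_i$, $D=\prod_i\ell_i$ and $\omega$ with $D/p<\varepsilon$. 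Next I build $R=x^D-1$, the $B_i$ and the $P_C$ as in Lemma~\ref{lm-8}, and the random combination $H=\sum_j u_jF_j$ as in Lemma~\ref{lm-9}. The oracle is then queried on $f=H$, $n=D$ (so $g=R$), with degree bound $2D$, and with $q=p$.

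For correctness I will argue the two sides separately. If $\Psi$ is satisfiable, then $S=Z(R,F_1,\dots,F_m)$ is nonempty by Lemma~\ref{lm-8}. Since always $S\subseteq Z(R,H)$, the polynomial $H$ vanishes at some root of $R$, so $\deg\gcd(H,x^D-1)>0$ for every random choice; no error occurs on this side. If $\Psi$ is unsatisfiable, then $S=\emptyset$. By Lemma~\ref{lm-9}, with probability at least $1-D/p>1-\varepsilon$ we have $Z(R,H)=\emptyset$. Because $R$ is squarefree and splits over $\F_p$ (as shown in that proof), this means $\gcd(R,H)=1$. Hence the reduction has one-sided error at most $\varepsilon$. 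If the oracle is itself randomized, I will amplify it by majority vote to error $\gamma=1/10$, so that the total error is at most $1/5<1/3$. Repeating the whole procedure and taking the majority then gives a BPP algorithm for SAT, and Cook--Levin (Theorem~\ref{the-4}) yields $\mathrm{NP}\subseteq\mathrm{BPP}$.

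For the size bookkeeping, I will check that $|\operatorname{supp}(H)|\le 3n+8k$, that $\deg H\le 2D$ and $n=D\le 2D$, and that $\log D\le\log p=\operatorname{poly}(s)$. Thus $T$, $\log D$ and $\log q$ are all polynomial in the SAT input length $s$, and the polynomial-time decider composes into a polynomial-time procedure.

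The only step I expect to require care is the field construction. I must make sure Lemma~\ref{lm-7} (via Lemma~\ref{lm-6}) is invoked so that it is Las Vegas, with a polynomially bounded failure or skip probability that can be absorbed into the constant error budget. Everything else is inherited directly from Lemmas~\ref{lm-8} and~\ref{lm-9}.
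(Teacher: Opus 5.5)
Your argument is correct, but it takes a genuinely different and simpler route than the paper. The paper keeps the whole pipeline of Section~\ref{sec-1}, including Valiant--Vazirani isolation (Lemma~\ref{lem:source}). As a result, its reduction can also err on satisfiable inputs, since isolation may output an unsatisfiable $\Psi$, and succeeds there only with probability $1/A(s)$. Lemma~\ref{lem:amplification} therefore needs $r=\lceil 8A(s)\rceil$ trials with $\varepsilon=\gamma=1/(100r)$.

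You observe that a decision oracle does not care about uniqueness. Since $S\subseteq Z(R,H)$ for every choice of the $u_j$, a satisfiable $\Psi$ always gives $\deg\gcd(H,x^D-1)>0$. The reduction's only error is a compression failure on unsatisfiable inputs, with probability at most $D/p$. The zero-set statements of Lemmas~\ref{lm-8} and~\ref{lm-9} never use uniqueness, so applying them to an arbitrary 3CNF is legitimate.

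This buys you several things:
\begin{itemize}
\item A constant $\varepsilon$ suffices.
\item A single trial already has error at most $\varepsilon+\gamma<1/3$.
\item With a deterministic decider, defaulting to ``accept'' if the Las Vegas field construction is cut off, you never reject a satisfiable formula. This places SAT in $\mathrm{coRP}$, a stronger conclusion than stated.
\end{itemize}
The paper's route buys only uniformity with Theorem~\ref{the-10}, where uniqueness really is needed to keep $\gcd(R,H)$ a binomial inside the support promise.

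Minor points: the 3CNF conversion only needs to be equisatisfiable, not parsimonious. For $\gcd(R,H)=1$ it is enough that $R$ splits over $\F_p$. You also use $n$ both for the number of variables and for the exponent in $g=x^n-1$, and should rename one.
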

\begin{proof}
The reduction is given by Lemmas~\ref{lem:source}--\ref{lem:amplification}.
For each randomized trial, Lemma~\ref{lem:source} produces a bounded-clause
formula whose unsatisfiability is zero-preserved and whose satisfiable inputs
are isolated with inverse-polynomial probability. Lemmas~\ref{lem:field-domain}
and~\ref{lem:encoding} construct, over a prime field, a polynomial system whose
common roots over the root set of \(x^E-1\) are precisely the satisfying
assignments of the isolated formula. Lemma~\ref{lem:compression-GCD} replaces
that system by one sparse polynomial \(H\) with inverse-polynomially small
failure probability and identifies positive GCD degree with the presence of a
satisfying assignment on the successful compression event.
Lemma~\ref{lem:instance-bounds} checks that the constructed instance has the
required restricted form \(g=x^n-1\), with \(n=E\) and \(D=2E\), and is
represented in polynomial sparse bit size. Finally,
Lemma~\ref{lem:amplification} repeats the trial polynomially many times and
shows that a polynomial-time solver for the restricted problem gives a BPP
algorithm for SAT. Since every language in \(\mathrm{NP}\) has a polynomial-time
many-one reduction to SAT, this implies \(\mathrm{NP}\subseteq\mathrm{BPP}\).
\end{proof}
\begin{remark}[Scope]
\label{rem:scope}
The theorem is stated in the exact scope used by the reduction: variable prime
fields, randomized reductions with a BPP consequence, target polynomial
\(g=x^n-1\), and target degree bound \(D=2n\). It does not assert deterministic
Karp hardness, fixed-field hardness, or the stronger side condition
\(\deg f\le n\).
\end{remark}

We now present the necessary lemmas along with their proofs.
Observe that the proof may be arranged in the following sequence:
\begin{figure}[h]
    \centering
    \small
    \begin{tikzpicture}
        \node (lem15) at (-6,0) {Lemma~\ref{lem:source}};
        \node (lem16) at (-3.5,0) {Lemma~\ref{lem:field-domain}};
        \node (lem17) at (-1,0) {Lemma~\ref{lem:encoding}};
        \node (lem18) at (1.5,0.5) {Lemma~\ref{lem:compression-GCD}};
        \node (lem19) at (1.5,-0.5) {Lemma~\ref{lem:instance-bounds}};
        \node (lem20) at (4,0) {Lemma~\ref{lem:amplification}};
        \node (thm13) at (6.5,0) {Theorem~\ref{thm-RUD-F}};

        \draw [->] (lem15) -- (lem16);
        \draw [->] (lem16) -- (lem17);
        \draw [->] (lem17) --++(1.2,0) |- (lem18);
        \draw [->] (lem17) --++(1.2,0) |- (lem19);
        \draw [->] (lem18) --++(1.2,0) |- (lem20);
        \draw [->] (lem19) --++(1.2,0) |- (lem20);
        \draw [->] (lem20) -- (thm13);
    \end{tikzpicture}
    \label{fig:pf-sequence}
\end{figure}
% \begin{align*}
% &\text{Lemma~\ref{lem:source}}
% \longrightarrow
% \text{Lemma~\ref{lem:field-domain}}
% \longrightarrow
% \text{Lemma~\ref{lem:encoding}},\\
% &\text{Lemma~\ref{lem:encoding}}
% \longrightarrow
% \text{Lemma~\ref{lem:compression-GCD}},\qquad
% \text{Lemma~\ref{lem:encoding}}
% \longrightarrow
% \text{Lemma~\ref{lem:instance-bounds}},\\
% &\text{Lemmas~\ref{lem:compression-GCD} and~\ref{lem:instance-bounds}}
% \longrightarrow
% \text{Lemma~\ref{lem:amplification}}
% \longrightarrow
% \text{Theorem~\ref{thm-RUD-F}}.
% \end{align*}

\begin{lemma}[Isolation and parsimonious bounded-clause source]
\label{lem:source}
There is a randomized polynomial-time procedure which, given a SAT instance
\(\Phi\) of bit length \(s\), outputs a CNF formula
\[
  \Psi(y_1,\ldots,y_m)=C_1\wedge\cdots\wedge C_k
\]
with clause size at most three, where \(m,k=\poly(s)\), such that:
\begin{enumerate}
  \item if \(\Phi\) is unsatisfiable, then every possible output \(\Psi\) is
  unsatisfiable;
  \item if \(\Phi\) is satisfiable, then with probability at least \(1/A(s)\),
  for a polynomial \(A\), the output \(\Psi\) has exactly one satisfying
  assignment;
  \item the bounded-clause conversion is parsimonious, so it preserves the
  exact number of satisfying assignments.
\end{enumerate}
The isolation input is the zero-preserving Valiant--Vazirani isolation theorem
together with polynomial-size CNF encodings of affine parity constraints over
\(\F_2\), followed by the standard parsimonious conversion to clauses of size
at most three. 
%Bibliographic data for this source input are TODO.
\begin{proof}
Apply the zero-preserving Valiant--Vazirani isolation procedure to the SAT
instance. Its zero-preserving property gives the first item: if the original
formula has no satisfying assignment, then no output produced by the procedure
has a satisfying assignment. If the original formula is satisfiable, the
isolation guarantee gives, with probability at least \(1/A(s)\) for a
polynomial \(A\), a formula with exactly one satisfying assignment, after the
affine parity constraints have been encoded by polynomial-size CNF formulas
over Boolean variables.

It remains to keep the bounded-clause conversion parsimonious. Consider a long
clause
\[
  \lambda_1\vee\cdots\vee\lambda_t .
\]
Introduce auxiliary variables \(p_2,\ldots,p_{t-1}\), impose
\[
  p_2=\lambda_1\vee\lambda_2,\qquad
  p_j=p_{j-1}\vee\lambda_j\quad (3\le j\le t-1),
\]
encode each displayed OR equivalence by its three-clause CNF encoding, and add
the final clause \(p_{t-1}\vee\lambda_t\). For fixed values of the old
variables, the auxiliary variables are uniquely forced by the recurrence above.
The final clause is satisfied exactly when the original long clause is
satisfied. Thus every satisfying assignment of the old formula extends to
exactly one satisfying assignment of the converted formula, and no
non-satisfying assignment extends to a satisfying one. Clauses of size one or
two may be retained, or padded by repeated literals, without changing the
truth condition. Hence the conversion preserves the exact number of satisfying
assignments and yields clauses of size at most three.
\end{proof}
\end{lemma}

\begin{lemma}[Prime-field package and Boolean root domain]
\label{lem:field-domain}
Let \(\Psi(y_1,\ldots,y_m)\) be as in Lemma~\ref{lem:source}, and let
\(\varepsilon>0\) be inverse-polynomially small. Choose \(r_{\rm dum}\) with
\[
  2^{r_{\rm dum}}>1/\varepsilon,\qquad
  N=\max(m+r_{\rm dum},4).
\]
There is a Las Vegas randomized polynomial-time construction which outputs
distinct primes
\[
  \ell_1,\ldots,\ell_N,
\]
an integer \(c\ge 11\), and a prime
\[
  p=1+cM,\qquad M=\prod_{j=1}^N \ell_j,
\]
with \(\log p=\poly(N)\). Define
\[
  E=\prod_{i=1}^m \ell_i,\qquad
  P_{\rm dum}=\prod_{j=m+1}^N \ell_j.
\]
Then
\[
  E\mid p-1,\qquad
  \frac{E}{p}<\varepsilon,\qquad
  \log E\le \log p=\poly(s).
\]
Moreover one can construct, in Las Vegas expected polynomial time, an element
\(\omega\in\F_p^*\) of exact order \(E\). For
\[
  z_i=\omega^{E/\ell_i}\qquad (1\le i\le m),
\]
the element \(z_i\) has exact order \(\ell_i\). 
%The bibliographic data for the prime-field construction are TODO.
\begin{proof}
Apply the prime-field construction to the dummy \(N\)-variable formula
\[
  \bigwedge_{j=1}^N (u_j\vee u_j\vee u_j).
\]
It returns distinct primes \(\ell_1,\ldots,\ell_N\), an integer \(c\ge 11\),
and a prime \(p=1+cM\), where \(M=\prod_{j=1}^N\ell_j\), with
\(\log p=\poly(N)\). Since
\[
  p-1=cM=cEP_{\rm dum},
\]
we have \(E\mid p-1\). Also
\[
  \frac{E}{p}<\frac{E}{p-1}
  =\frac{1}{cP_{\rm dum}}\le \frac{1}{P_{\rm dum}}.
\]
The dummy primes are distinct and at least \(2\), so
\[
  P_{\rm dum}=\prod_{j=m+1}^N \ell_j\ge 2^{N-m}\ge 2^{r_{\rm dum}}>1/\varepsilon.
\]
Therefore \(E/p<\varepsilon\). The inequality \(\log E\le \log p\) follows
from \(E<p\), and \(\log p=\poly(N)=\poly(s)\) because
\(N=m+r_{\rm dum}\) up to the harmless lower bound \(4\), with
\(m=\poly(s)\) and \(r_{\rm dum}\) chosen inverse-polynomially in
\(\varepsilon\).

Since \(p\) is prime, \(\F_p^*\) is cyclic. Sample \(a\in\F_p^*\), put
\[
  \omega=a^{(p-1)/E},
\]
and test
\[
  \omega^{E/\ell_i}\ne 1\qquad (1\le i\le m).
\]
Because \(E=\prod_{i=1}^m\ell_i\) is squarefree, these tests hold exactly when
\(\omega\) has order \(E\). The accepted prime-field package gives
inverse-polynomial success probability for this sampling step, so repetition
gives a Las Vegas expected polynomial-time construction. Finally, if
\(\omega\) has order \(E\), then
\[
  z_i=\omega^{E/\ell_i}
\]
has order \(\ell_i\), since raising an element of order \(E\) to the
\(E/\ell_i\)-th power gives order \(E/\gcd(E,E/\ell_i)=\ell_i\).
\end{proof}
\end{lemma}

\begin{lemma}[Polynomial encoding of assignments and clauses]
\label{lem:encoding}
Assume the data of Lemma~\ref{lem:field-domain}. Put
\[
  R(x)=x^E-1.
\]
For \(1\le i\le m\), define
\[
  B_i(x)=\bigl(x^{E/\ell_i}-1\bigr)
         \bigl(x^{E/\ell_i}-z_i\bigr),
\]
and define literal polynomials
\[
  L_{i,1}(x)=x^{E/\ell_i}-z_i,\qquad
  L_{i,0}(x)=x^{E/\ell_i}-1,
\]
where \(L_{i,1}\) corresponds to \(y_i\) and \(L_{i,0}\) corresponds to
\(\neg y_i\). If
\[
  C=\lambda_1\vee\lambda_2\vee\lambda_3
\]
is a clause, with repeated literals allowed, define
\[
  P_C(x)=L_{\lambda_1}(x)L_{\lambda_2}(x)L_{\lambda_3}(x).
\]
Let \(F_1,\ldots,F_t\) be the list consisting of all \(B_i\) and all clause
polynomials \(P_C\) for clauses of \(\Psi\). Then
\[
  Z(R,F_1,\ldots,F_t)\subseteq \F_p
\]
is in bijection with the satisfying assignments of \(\Psi\).

More explicitly, the roots of \(R\) are \(\omega^a\), \(a\in\Z/E\Z\), and the
common roots of \(R,B_1,\ldots,B_m\) correspond exactly to residues satisfying
\[
  a\bmod \ell_i\in\{0,1\}\qquad (1\le i\le m).
\]
Under the convention
\[
  y_i=\mathrm{true}\Longleftrightarrow a\equiv 1\pmod{\ell_i},
  \qquad
  y_i=\mathrm{false}\Longleftrightarrow a\equiv 0\pmod{\ell_i},
\]
the polynomial \(P_C\) vanishes on such a root if and only if the corresponding
assignment satisfies the clause \(C\).
\begin{proof}
Since \(\omega\) has exact order \(E\), the roots of \(R=x^E-1\) in
\(\F_p\) are precisely
\[
  \omega^a,\qquad a\in \Z/E\Z .
\]
For \(x=\omega^a\), one has
\[
  x^{E/\ell_i}=(\omega^{E/\ell_i})^a=z_i^a.
\]
The element \(z_i\) has order \(\ell_i\), so
\[
  B_i(\omega^a)=0
  \Longleftrightarrow
  (z_i^a-1)(z_i^a-z_i)=0
  \Longleftrightarrow
  a\bmod \ell_i\in\{0,1\}.
\]
The primes \(\ell_i\) are distinct, and therefore the Chinese remainder
isomorphism
\[
  \Z/E\Z\simeq \prod_{i=1}^m \Z/\ell_i\Z
\]
identifies the common roots of \(R,B_1,\ldots,B_m\) with choices of residues
\(a\bmod \ell_i\in\{0,1\}\). Under the stated convention, these choices are
exactly Boolean assignments to \(y_1,\ldots,y_m\).

On a Boolean-domain root, \(L_{i,1}\) vanishes exactly when
\(a\equiv 1\pmod{\ell_i}\), namely when \(y_i\) is true, while \(L_{i,0}\)
vanishes exactly when \(a\equiv 0\pmod{\ell_i}\), namely when \(\neg y_i\) is
true. For a clause \(C=\lambda_1\vee\lambda_2\vee\lambda_3\), the polynomial
\[
  P_C=L_{\lambda_1}L_{\lambda_2}L_{\lambda_3}
\]
vanishes over the field \(\F_p\) exactly when at least one factor vanishes,
because \(\F_p\) is an integral domain. This is exactly the condition that at
least one literal of \(C\) is true. Thus imposing all the \(P_C\)'s cuts out
precisely the satisfying assignments of \(\Psi\) inside the Boolean root
domain, proving the asserted bijection.
\end{proof}
\end{lemma}

\begin{lemma}[Random compression and GCD correctness]
\label{lem:compression-GCD}
Choose independent uniform coefficients
\[
  u_1,\ldots,u_t\in\F_p
\]
and set
\[
  H(x)=\sum_{j=1}^t u_jF_j(x).
\]
Let
\[
  S=Z(R,F_1,\ldots,F_t),\qquad T_R=Z(R).
\]
Then \(S\subseteq Z(R,H)\) always, and
\[
  \Pr\bigl[Z(R,H)\ne S\bigr]\le \frac{E}{p}<\varepsilon.
\]
On the complementary event, \(H\) has exactly the same roots on \(Z(R)\) as the
full constraint list \(F_1,\ldots,F_t\).

The polynomial \(R=x^E-1\) is squarefree over \(\F_p\). Hence, for every
\(H\in\F_p[x]\),
\[
  \gcd(R,H)=
  \prod_{\beta\in Z(R),\,H(\beta)=0}(x-\beta),
\]
where the GCD is monic and the empty product is \(1\). Consequently, on the
successful compression event:
\begin{enumerate}
  \item if \(\Psi\) is unsatisfiable, then \(\gcd(R,H)=1\);
  \item if \(\Psi\) has exactly one satisfying assignment, then
  \(\gcd(R,H)=x-\alpha\) for the corresponding Boolean-domain root \(\alpha\),
  and in particular \(\deg\gcd(R,H)=1>0\).
  \end{enumerate}
\begin{proof}
If \(\beta\in S\), then every \(F_j(\beta)\) is zero, and hence
\[
  H(\beta)=\sum_{j=1}^t u_jF_j(\beta)=0.
\]
Thus \(S\subseteq Z(R,H)\) for every choice of the coefficients.

Now fix \(\alpha\in T_R\setminus S\). By definition of \(S\), at least one
\(F_j(\alpha)\) is nonzero. Condition on all coefficients except such a
coefficient \(u_j\). The equation \(H(\alpha)=0\) is then a nontrivial affine
linear equation in the uniform variable \(u_j\in\F_p\), so it holds with
probability \(1/p\). Taking the union bound over the \(E\) points of \(T_R\)
gives
\[
  \Pr\bigl[Z(R,H)\ne S\bigr]\le |T_R|/p=E/p<\varepsilon.
\]
On the complementary event, no point of \(T_R\setminus S\) is added as a root
of \(H\), while all points of \(S\) remain roots, so \(H\) and the full
constraint list have the same roots on \(Z(R)\).

It remains to relate these roots to the GCD. Since \(E\mid p-1\), the
characteristic \(p\) does not divide \(E\). Therefore
\[
  R'(x)=Ex^{E-1}
\]
does not vanish at a root of \(R=x^E-1\), and \(R\) is squarefree. All roots
of \(R\) in \(\F_p\) are the elements of \(T_R\), so the monic GCD of \(R\)
and \(H\) is exactly the product of the linear factors corresponding to the
common roots:
\[
  \gcd(R,H)=
  \prod_{\beta\in Z(R),\,H(\beta)=0}(x-\beta).
\]
If \(\Psi\) is unsatisfiable, Lemma~\ref{lem:encoding} gives \(S=\varnothing\);
on the successful compression event the product is empty and the GCD is \(1\).
If \(\Psi\) has exactly one satisfying assignment, Lemma~\ref{lem:encoding}
gives \(S=\{\alpha\}\); on the same event the product is \(x-\alpha\), whose
degree is \(1\).
\end{proof}
\end{lemma}

\begin{lemma}[Constructed instance satisfies the sparse target bounds]
\label{lem:instance-bounds}
For one trial of the reduction, set
\[
  f=H,\qquad g=R=x^E-1,\qquad n=E,\qquad D=2E.
\]
Then
\[
  g=x^n-1,\qquad n=E\le D=2E.
\]
The polynomial \(R\) has two terms. Each \(B_i\) has at most three terms after
collecting like terms, and each clause polynomial \(P_C\) has at most eight
monomial contributions before collection. Moreover
\[
  \deg B_i\le E,\qquad \deg P_C\le 2E.
\]
Therefore, after listing terms, collecting like exponents, and deleting zero
coefficients,
\[
  |\supp(H)|\le 3m+8k,\qquad \deg H\le 2E
\]
when \(H\ne 0\), while the zero polynomial has empty sparse support and
vacuous degree bound. All exponents and coefficients in the instance have
polynomial bit length in \(s\), and the sparse representation of \(H\) is
computable in polynomial bit time.
\begin{proof}
The identities \(g=x^n-1\) and \(n\le D\) follow immediately from the
definitions \(g=R=x^E-1\), \(n=E\), and \(D=2E\). The polynomial \(R\) has the
two displayed monomials \(x^E\) and \(-1\).

For each variable polynomial,
\[
  B_i(x)=x^{2E/\ell_i}-(1+z_i)x^{E/\ell_i}+z_i,
\]
so \(B_i\) has at most three terms after collecting like terms. Since
\(\ell_i\ge 2\),
\[
  \deg B_i=2E/\ell_i\le E.
\]
Each clause polynomial \(P_C\) is a product of three binomials, hence has at
most \(2^3=8\) monomial contributions before collection. If the three literals
involve indices \(a,b,c\), with repetitions allowed, then
\[
  \deg P_C\le E/\ell_a+E/\ell_b+E/\ell_c\le 3E/2\le 2E,
\]
because every \(\ell_i\ge 2\). Repeated literals do not increase this bound.

The polynomial \(H=\sum_j u_jF_j\) is a linear combination of the \(m\)
polynomials \(B_i\) and the \(k\) clause polynomials. Listing all their
monomial contributions gives at most \(3m+8k\) terms before collecting.
Collecting equal exponents and deleting zero coefficients can only decrease
the support size, so
\[
  |\supp(H)|\le 3m+8k.
\]
The same degree estimates give \(\deg H\le 2E\) whenever \(H\ne 0\); if
\(H=0\), the sparse support is empty and the degree condition is vacuous.

The exponents are bounded by \(2E\), so their binary lengths are
\(O(\log E)\). Lemma~\ref{lem:field-domain} gives
\(\log E\le \log p=\poly(s)\), and all coefficients lie in \(\F_p\). Since
\(m,k=\poly(s)\), the list of monomial contributions has polynomial size, and
the arithmetic for collecting equal exponents and removing zero coefficients
takes polynomial bit time.
\end{proof}
\end{lemma}

\begin{lemma}[Amplification gives a BPP SAT algorithm]
\label{lem:amplification}
Assume that there is a polynomial-time solver for the restricted sparse
coprimality problem of Theorem~\ref{thm-RUD-F}, or
equivalently for deciding whether \(\deg\gcd(f,x^n-1)>0\), on the constructed
instances. If the solver is randomized with bounded error, repeat it
independently so that each query has error at most \(\gamma\).

Let \(1/A(s)\) be the isolation success lower bound from
Lemma~\ref{lem:source}. Choose
\[
  r=\lceil 8A(s)\rceil,\qquad
  \varepsilon=\gamma=\frac{1}{100r}.
\]
Run \(r\) independent trials. In each trial, perform the isolation step,
construct the prime-field data and the sparse instance
\[
  (f,g,n,D)=(H,x^E-1,E,2E),
\]
query the amplified restricted solver, and accept the SAT input \(\Phi\) if
some trial reports ``not coprime'', equivalently
\(\deg\gcd(f,g)>0\).

If \(\Phi\) is unsatisfiable, then the false acceptance probability is at most
\[
  r(\varepsilon+\gamma)=1/50<1/3.
\]
If \(\Phi\) is satisfiable, then a trial accepts with probability at least
\[
  \frac{1}{2A(s)},
\]
and hence the probability that all \(r\) trials reject is at most
\[
  \left(1-\frac{1}{2A(s)}\right)^r
  \le \exp(-r/(2A(s)))\le e^{-4}<1/3.
\]
For strict worst-case BPP time, the Las Vegas field-construction subroutine may
be given a polynomial cutoff, with cutoff failure counted as an additional
per-trial bad event and the constants above decreased accordingly. Therefore a
polynomial-time solver for the restricted sparse coprimality problem would
imply \(\mathrm{SAT}\in\mathrm{BPP}\), and hence
\(\mathrm{NP}\subseteq\mathrm{BPP}\).
\begin{proof}
Run the stated \(r\) independent trials. In an unsatisfiable input, the
zero-preserving property from Lemma~\ref{lem:source} makes every isolated
formula \(\Psi\) unsatisfiable. By Lemma~\ref{lem:compression-GCD}, after a
successful compression event the constructed pair is coprime. Thus a false
accepting trial can occur only if compression fails or the amplified solver
answers incorrectly. The per-trial probability is at most
\(\varepsilon+\gamma\), and the union bound over all \(r\) trials gives
\[
  r(\varepsilon+\gamma)=r\left(\frac{1}{100r}+\frac{1}{100r}\right)
  =\frac1{50}<\frac13.
\]

Now suppose that \(\Phi\) is satisfiable. In one trial,
Lemma~\ref{lem:source} produces a uniquely satisfiable \(\Psi\) with
probability at least \(1/A(s)\). Conditional on this event, the compression is
successful and the amplified solver is correct with probability at least
\[
  1-\varepsilon-\gamma>1/2.
\]
On that joint event, Lemma~\ref{lem:compression-GCD} gives
\(\deg\gcd(H,x^E-1)=1>0\), and the solver reports ``not coprime.'' Therefore a
single trial accepts with probability at least \(1/(2A(s))\). The probability
that all \(r=\lceil 8A(s)\rceil\) trials reject is at most
\[
  \left(1-\frac{1}{2A(s)}\right)^r
  \le \exp(-r/(2A(s)))\le e^{-4}<1/3.
\]

The construction in each trial has polynomial sparse bit size by
Lemma~\ref{lem:instance-bounds}, and the number of trials is polynomial in the
input length. If the field-construction subroutine is used in Las Vegas
expected time, this provides the expected-time accounting described above. For a
strict worst-case BPP algorithm, impose a polynomial cutoff on that Las Vegas
subroutine and count a cutoff as another bad per-trial event; reducing the
constants in the choices of \(\varepsilon\) and \(\gamma\) keeps the total
error below \(1/3\). Hence, the assumed solver yields a BPP algorithm for SAT,
and consequently \(\mathrm{NP}\subseteq\mathrm{BPP}\).
\end{proof}
\end{lemma}

\section{Conclusion}
\label{sec-conc}

In this paper, we show that sparse polynomial GCD computation over a finite field is NP-hard under BPP reduction.
Taking the contrapositive, for sparse polynomials $f,g\in\F_q[x]$, under the standard complexity assumption
\(\mathrm{NP}\nsubseteq\mathrm{BPP}\), there is no algorithm which computes
\(\gcd(f,g)\) for every finite-field input pair satisfying
\[
  \deg f,\deg g\le D,\qquad
  |\operatorname{supp}(f)|,\ |\operatorname{supp}(g)|,\
  |\operatorname{supp}(\gcd(f,g))|\le T
\]
with bit complexity polynomial in \(T\), \(\log D\), and \(\log q\).
Equivalently, a positive answer to the question in this bit-complexity model
would imply \(\mathrm{NP}\subseteq\mathrm{BPP}\).

We also show that the Roots of Unity Detection over finite fields is NP-hard under BPP reduction. Equivalently, determining whether a sparse polynomial \(f(x)\) is co-prime to $x^n-1$ is NP-hard under BPP reduction.

\section*{Acknowledgment}
This paper is supported by the Strategic Priority Research Program of CAS Grants XDA0480502 and XDA0480503, NSFC Grants 12288201 and 92270001.

\bibliographystyle{KLMM/klmm}   % 文献格式样式
\bibliography{Refs} 

\begin{thebibliography}{27}
\providecommand{\natexlab}[1]{#1}
\providecommand{\url}[1]{\texttt{#1}}
\expandafter\ifx\csname urlstyle\endcsname\relax
  \providecommand{\doi}[1]{doi: #1}\else
  \providecommand{\doi}{doi: \begingroup \urlstyle{rm}\Url}\fi

\bibitem[Bernstein and Yang(2019)]{bernstein2019fast}
Daniel~J Bernstein and Bo-Yin Yang.
\newblock Fast constant-time gcd computation and modular inversion.
\newblock \emph{IACR transactions on cryptographic hardware and embedded systems}, pp.\  340--398, 2019.

\bibitem[Bi et~al.(2013)Bi, Cheng, and Rojas]{BI-SIAMJC2016}
Jingguo Bi, Qi~Cheng, and J.~Maurice Rojas.
\newblock Sub-linear root detection, and new hardness results, for sparse polynomials over finite fields.
\newblock In \emph{Proceedings of the 38th International Symposium on Symbolic and Algebraic Computation}, ISSAC '13, pp.\  61–68, New York, NY, USA, 2013.

\bibitem[Brown(1971)]{brown1971euclid}
W.~S. Brown.
\newblock On euclid's algorithm and the computation of polynomial greatest common divisors.
\newblock \emph{Journal of the ACM}, 18\penalty0 (4):\penalty0 478–504, 1971.

\bibitem[Brown and Traub(1971)]{brown1971subresultants}
W.~S. Brown and J.~F. Traub.
\newblock On euclid's algorithm and the theory of subresuhants.
\newblock \emph{Journal of the ACM}, 18\penalty0 (4):\penalty0 505--514, 1971.

\bibitem[Collins(1967)]{collins1967subresultants}
George~E. Collins.
\newblock Subresultants and reduced polynomial remainder sequences.
\newblock \emph{J. ACM}, 14\penalty0 (1):\penalty0 128–142, January 1967.
\newblock ISSN 0004-5411.
\newblock \doi{10.1145/321371.321381}.
\newblock URL \url{https://doi.org/10.1145/321371.321381}.

\bibitem[Cook(2023)]{cook1971complexity}
Stephen~A. Cook.
\newblock The complexity of theorem-proving procedures.
\newblock In \emph{Logic, Automata, and Computational Complexity: The Works of Stephen A. Cook}, pp.\  143–152, New York, NY, USA, 2023. Association for Computing Machinery.

\bibitem[Davenport and Carette(2009)]{SparsityChallenges}
James~Harold Davenport and Jacques Carette.
\newblock The sparsity challenges.
\newblock In \emph{2009 11th International Symposium on Symbolic and Numeric Algorithms for Scientific Computing}, pp.\  3--7, 2009.
\newblock \doi{10.1109/SYNASC.2009.62}.

\bibitem[Filaseta et~al.(2008)Filaseta, Granville, and Schinzel]{FGS2008}
Michael Filaseta, Andrew Granville, and Andrzej Schinzel.
\newblock Irreducibility and greatest common divisor algorithms for sparse polynomials.
\newblock In James McKee and ChrisEditors Smyth (eds.), \emph{Number Theory and Polynomials}, London Mathematical Society Lecture Note Series, pp.\  155–176. Cambridge University Press, 2008.

\bibitem[Gathen et~al.(1993)Gathen, Karpinski, and Shparlinski]{Gathen1993CC}
Joachim von~zur Gathen, Marek Karpinski, and Igor Shparlinski.
\newblock Counting curves and their projections.
\newblock In \emph{Proceedings of the twenty-fifth annual ACM symposium on Theory of Computing}, STOC'93, pp.\  805--812. ACM, 1993.

\bibitem[Hong and Yang(2024)]{hong2023computing}
Hoon Hong and Jing Yang.
\newblock Computing greatest common divisor of several parametric univariate polynomials via generalized subresultant polynomials, 2024.
\newblock URL \url{https://arxiv.org/abs/2401.00408}.

\bibitem[Hu and Monagan(2016)]{hu2016fast}
Jiaxiong Hu and Michael Monagan.
\newblock A fast parallel sparse polynomial gcd algorithm.
\newblock In \emph{Proceedings of the 2016 ACM International Symposium on Symbolic and Algebraic Computation}, ISSAC '16, pp.\  271–278, New York, NY, USA, 2016. Association for Computing Machinery.

\bibitem[Huang and Gao(2025)]{GCDHuangGao2025}
Qiao-Long Huang and Xiao-Shan Gao.
\newblock Bit complexity of polynomial gcd on sparse representation.
\newblock \emph{Mathematics of Computation}, 95\penalty0 (357):\penalty0 389--413, 2025.

\bibitem[Kaltofen(1988)]{GCD-Kaltofen1988}
Erich Kaltofen.
\newblock Greatest common divisors of polynomials given by straight-line programs.
\newblock \emph{J. ACM}, 35\penalty0 (1):\penalty0 231–264, 1988.

\bibitem[Kaltofen and Koiran(2005)]{kaltofen2005complexity}
Erich Kaltofen and Pascal Koiran.
\newblock On the complexity of factoring bivariate supersparse (lacunary) polynomials.
\newblock In \emph{Proceedings of the 2005 International Symposium on Symbolic and Algebraic Computation (ISSAC '05)}, pp.\  208--215. ACM, 2005.

\bibitem[Kaltofen and Trager(1990)]{kaltofen1990computing}
Erich Kaltofen and Barry~M. Trager.
\newblock Computing with polynomials given by black boxes for their evaluations: Greatest common divisors, factorization, separation of numerators and denominators.
\newblock \emph{Journal of Symbolic Computation}, 9\penalty0 (3):\penalty0 301--320, 1990.

\bibitem[Karpinski and Shparlinski(1999)]{Karpinski1999CH}
Marek Karpinski and Igor Shparlinski.
\newblock On the computational hardness of testing square-freeness of sparse polynomials.
\newblock In \emph{Proc. AAECC-13 (Heidelberg, Germany, 1999)}, Vol. 1719 of LNCS, pp.\  492--497. Springer Verlag, 1999.

\bibitem[Levin(1973)]{levin1973universal}
L.~A. Levin.
\newblock Universal sequential search problems.
\newblock \emph{Probl. Peredachi Inf.}, 9\penalty0 (3):\penalty0 115--116, 1973.
\newblock Translation in \emph{Problems Inform. Transmission} \textbf{9} (3), 265--266 (1973).

\bibitem[{MechMath Team}(2026)]{MMAT}
{MechMath Team}.
\newblock Mechmath agent team.
\newblock Academy of Mathematics and Systems Science, Chinese Academy of Sciences, 2026.
\newblock URL \url{https://eonmath.github.io/mechmath}.

\bibitem[Moses and Yun(1973)]{Moses1973EZ}
Joel Moses and David Y.~Y. Yun.
\newblock The ez gcd algorithm.
\newblock In \emph{Proceedings of the ACM Annual Conference}, ACM '73, pp.\  159–166, New York, NY, USA, 1973. Association for Computing Machinery.

\bibitem[Plaisted(1984)]{plaisted1984}
David~A. Plaisted.
\newblock New np-hard and np-complete polynomial and integer divisibility problems.
\newblock \emph{Theoretical Computer Science}, 31\penalty0 (1):\penalty0 125--138, 1984.

\bibitem[Plaisted(1977)]{plaisted1977sparse}
David~Alan Plaisted.
\newblock Sparse complex polynomials and polynomial reducibility.
\newblock \emph{Journal of Computer and System Sciences}, 14\penalty0 (2):\penalty0 210--221, 1977.

\bibitem[Schinzel(2002)]{GCD-Schinzel2003}
A.~Schinzel.
\newblock On the greatest common divisor of two univariate polynomials, i.
\newblock In GisbertEditor Wüstholz (ed.), \emph{A Panorama of Number Theory or The View from Baker’s Garden}, pp.\  337–352. Cambridge University Press, 2002.

\bibitem[Valiant and Vazirani(1985)]{VV1986}
L.G. Valiant and V.V. Vazirani.
\newblock Np is as easy as detecting unique solutions.
\newblock In \emph{Proceedings of the Seventeenth Annual ACM Symposium on Theory of Computing}, STOC '85, pp.\  458–463, 1985.

\bibitem[van~der Hoeven(2025)]{van2025optimizing}
Joris van~der Hoeven.
\newblock Optimizing the half-gcd algorithm.
\newblock \emph{Applicable Algebra in Engineering, Communication and Computing}, May 2025.
\newblock URL \url{https://doi.org/10.1007/s00200-025-00690-w}.

\bibitem[van~der Hoeven and Lecerf(2021)]{van2021sparse}
Joris van~der Hoeven and Gr\'{e}goire Lecerf.
\newblock On sparse interpolation of rational functions and gcds.
\newblock \emph{ACM Commun. Comput. Algebra}, 55\penalty0 (1):\penalty0 1–12, 2021.

\bibitem[von~zur Gathen and Gerhard(2013)]{ModernCA}
Joachim von~zur Gathen and J{\"u}rgen Gerhard.
\newblock \emph{Modern Computer Algebra}.
\newblock Cambridge University Press, 2013.

\bibitem[Zippel(1979)]{zippel1979probabilistic}
Richard Zippel.
\newblock Probabilistic algorithms for sparse polynomials.
\newblock In Edward~W. Ng (ed.), \emph{Symbolic and Algebraic Computation, EUROSAM '79, An International Symposium on Symbolic and Algebraic Computation}, volume~72 of \emph{Lecture Notes in Computer Science}, pp.\  216--226. Springer, 1979.

\end{thebibliography}

\end{document}